\theoremstyle{plain}
\definecolor{ablue}{rgb}{0.3,0.4,0.8}
\definecolor{ared}{rgb}{0.95,0.4,0.4}
\definecolor{agreen}{rgb}{0,0.5,0.25}
\definecolor{ayellow}{rgb}{0.95,0.85,0.3}
\DeclarePairedDelimiter{\oiv}{(}{)}
\DeclarePairedDelimiter{\roiv}{[}{)}
\newcommand{\floor}[1]{\left\lfloor #1 \right\rfloor}
\newcommand{\Reals}{\mathbb{R}}
\newcommand{\Nats}{\mathbb{N}}
\newcommand{\Ints}{\mathbb{Z}}
\newcommand{\bi}{\mathbf{i}}
\newcommand{\cG}{\mathcal{G}}
\newcommand{\cC}{\mathcal{C}}
\newcommand{\sig}{\sigma}
\newcommand{\eps}{\varepsilon}
\renewcommand{\to}{\rightarrow}
\newcommand{\EC}{\mathcal{EC}}
\newcommand{\BEC}{\mathcal{BEC}}
\newcommand\eqdef{\mathrel{\overset{\makebox[0pt]{\mbox{\normalfont\tiny\sffamily def}}}{=}}}
\DeclareMathOperator{\diam}{diam}
\DeclareMathOperator{\radius}{radius}
\newcommand{\myin}{\mathrm{in}}
\newcommand{\myout}{\mathrm{out}}
\newcommand{\Bin}{B_{\myin}}
\newcommand{\Bout}{B_{\myout}}
\DeclareMathOperator{\poly}{poly}
\DeclareMathOperator{\Vol}{Vol}
\newcommand{\Var}{\mathsf{Var}}
\newcommand{\Cla}{\mathsf{Cla}}
\renewcommand{\leq}{\leqslant}
\renewcommand{\geq}{\geqslant}
\newcommand{\etal}{\emph{et~al.}}
\newcommand{\IS}{\textsc{Independent Set}\xspace}
\title{How does object fatness impact the complexity of packing in $d$ dimensions?}
\titlerunning{How does object fatness impact packing?}
\author{S\'andor Kisfaludi-Bak}{Max Planck Institut f\"ur Infromatik, Saarbr\"ucken, Germany}{sandor.kisfaludi-bak@mpi-inf.mpg.de}{}{}
\author{D\'aniel Marx}{Max Planck Institut f\"ur Infromatik, Saarbr\"ucken, Germany}{dmarx@cs.bme.hu}{}{Supported by ERC Consolidator Grant SYSTEMATICGRAPH (No.~{725978})}
\author{Tom C. van der Zanden}{Department of Data Analytics and Digitalisation, Maastricht University, The Netherlands}{T.vanderZanden@maastrichtuniversity.nl}{}{}
\authorrunning{S\'andor Kisfaludi-Bak, D\'aniel Marx, Tom C. van der Zanden}
\keywords{Geometric intersection graph, Independent Set, Object fatness}
\begin{document}

\maketitle
\begin{abstract}
Packing is a classical problem where one is given a set of subsets of Euclidean space called objects, and the goal is to find a maximum size subset of objects that are pairwise non-intersecting. The problem is also known as the Independent Set problem on the intersection graph defined by the objects. Although the problem is NP-complete, there are several subexponential algorithms in the literature. One of the key assumptions of such algorithms has been that the objects are fat, with a few exceptions in two dimensions; for example, the packing problem of a set of polygons in the plane surprisingly admits a subexponential algorithm. In this paper we give tight running time bounds for packing similarly-sized non-fat objects in higher dimensions.

We propose an alternative and very weak measure of fatness called the stabbing number, and show that the packing problem in Euclidean space of constant dimension $d \geq 3$ for a family of similarly sized objects with stabbing number $\alpha$ can be solved in $2^{O(n^{1-1/d}\alpha)}$ time. We prove that even in the case of axis-parallel boxes of fixed shape, there is no $2^{o(n^{1-1/d}\alpha)}$ algorithm under ETH. This result smoothly bridges the whole range of having constant-fat objects on one extreme ($\alpha=1$) and a subexponential algorithm of the usual running time, and having very ``skinny'' objects on the other extreme ($\alpha=n^{1/d}$), where we cannot hope to improve upon the brute force running time of $2^{O(n)}$, and thereby characterizes the impact of fatness on the complexity of packing in case of similarly sized objects. We also study the same problem when parameterized by the solution size $k$, and give a $n^{O(k^{1-1/d}\alpha)}$ algorithm, with an almost matching lower bound: there is no algorithm with running time of the form $f(k)n^{o(k^{1-1/d}\alpha/\log k)}$ under ETH. One of our main tools in these reductions is a new wiring theorem that may be of independent interest. 
\end{abstract}

\section{Introduction}


Many well-known NP-hard problems (e.g. {\sc Independent Set,
  Hamilton Cycle, Dominating Set}) can be solved in time
$2^{O(\sqrt{n})}$ when restricted to planar graphs, while only
$2^{O(n)}$ algorithms are known for general graphs \cite{DBLP:conf/soda/KleinM14,DBLP:conf/stacs/PilipczukPSL13,DBLP:journals/ipl/FominLRS11,DBLP:conf/esa/Thilikos11,DBLP:journals/csr/DornFT08,DemaineFHT05,DBLP:journals/jcss/DornFT12,DornPBF10,DBLP:journals/siamcomp/FominT06,DBLP:journals/talg/DemaineFHT05}. This beneficial
effect of planarity is known as the ``square root phenomenon,'' and
can be exploited also in the context of 2-dimensional geometric
problems where the problem is defined on various intersection graphs
in $\Reals^2$ \cite{AlberF04,BasteT17,FominLS12,MarxP15}. In particular, consider the geometric packing problem
where, given a set of polygons in $\Reals^2$, the task is to find a
subset of $k$ pairwise disjoint polygons. This problem
can be solved in time $n^{O(\sqrt{k})}$ \cite{MarxP15}, which -- when expressed only a as a function of the input -- gives an
$n^{O(\sqrt{n})}=2^{O(\sqrt{n}\log n)}$ algorithm for finding a maximum size
disjoint subset.

Can these 2-dimensional subexponential algorithms be generalized to higher dimensions? It seems that the natural generalization is to aim for $2^{O(n^{1-1/d})}$ , or in case of parameterized problems, either $2^{O(k^{1-1/d})}\cdot n^{O(1)}$ or $n^{O(k^{1-1/d})}$ time algorithms in $d$-dimensions: the literature contains upper and lower bounds of this form (although sometimes with extra logarithmic factors in the exponent) \cite{BergBKMZ18,MarxS14,SmithW98}. 
However, all of these algorithms
have various restrictions on the object family on which the intersection graph
is based: there is no known analogue of the $n^{O(\sqrt{k})}$ time algorithm of
Marx and Pilipczuk \cite{MarxP15} in higher dimensions with the same
generality of objects. There is a good reason for this: it is easy to see that
any $n$-vertex graph can be expressed as the intersection graph of
3-dimensional simple polyhedra. Thus a subexponential algorithm for
3-dimensional objects without any severe restriction would give a
subexponential algorithm for \textsc{Independent Set} on general graphs,
violating standard complexity-theoretic assumptions.

What could be reasonable restrictions on the objects that allow
running times of the form, e.g., $2^{O(n^{1-1/d})}$? One of the most
common restrictions is to study a set $F \subset 2^{\Reals^d}$ of
\emph{fat objects}, where for each object $o \in F$ the ratio
$\radius(\Bin(o))/\radius(\Bout(o))$ is at least some fixed positive
constant.  (We denote by $\radius(\Bin)$ and $\radius(\Bout)$ the
radius of the inscribed and circumscribed ball respectively.) Another
common restriction is to have \emph{similarly sized} objects, that is,
a family $F$ where the ratio of the largest and smallest object
diameter is at most some absolute constant. Many results concern only
\emph{unit disk graphs}, where~$F$ consists of unit disks in the
plane: unit disks are both fat and similarly sized. The focus of our
paper is to explore the role of fatness in the context of packing
problems and to understand when and to what extent fatness decreases
the complexity of the problem.
We observe that fatness is a crucial requirement for subexponential algorithms in higher dimensions, and this prompts us to explore in a quantitave way how fatness influences the running time. For this purpose, we introduce a parameter $\alpha$ describing the fatness of the objects and give upper and lower bounds taking into account this parameter as well.

More precisely, we introduce the notion of the \emph{stabbing number}, which
can be regarded as an alternative measure of fatness. This slightly extends a
similar definition by Chan~\cite{Chan03}. We say that an object $o$ is stabbed
by a point $p$ if $p\in o$.  A family of objects $F\subseteq 2^{\Reals^d}$ is
$\alpha$\emph{-stabbed} if for any $r\in \Reals$, the subset of $F$-objects
$o$ of diameter $\diam(o)\in \roiv{r/2,r}$ contained in any ball of radius $r$
can be stabbed by $\alpha^d$ points. The stabbing number of $F$ is defined as
$\inf_{\alpha \in \roiv{1,\infty}} \{F \text{ is   $\alpha$-stabbed}\}$. Note
that a set of $n$ objects in $d$-dimensions has stabbing number at most
$n^{1/d}$. The stabbing number is closely related to the inverse of a common
measure of fatness. This relationship is explored in Section~\ref{sec:app_fatness}.



By adapting a separator theorem from~\cite{BergBKMZ18}, we can give an
algorithm where the running time smoothly goes from $2^{O(n^{1-1/d})}$ to
$2^{O(n)}$ as the stabbing number goes from $O(1)$ to the maximum possible
$n^{1/d}$.

\begin{theorem}\label{thm:weightedalg}
Let $\alpha\in \roiv{1,\infty}$ and $2\leq d \in \Nats$ be fixed constants.
There is an algorithm that solves \IS for intersection graphs of
similarly sized $\alpha$-stabbed objects in $\Reals^d$ running in time
$2^{O(n^{1-1/d}\alpha)}$.
\end{theorem}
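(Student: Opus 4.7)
\quad I would prove the theorem by adapting the geometric separator theorem of~\cite{BergBKMZ18} to $\alpha$-stabbed families and combining it with the standard divide-and-conquer template for \IS.

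\emph{Structural bound.}\quad After rescaling so that every object has diameter in $\roiv{1/2,1}$, the $\alpha$-stabbing condition applied with $r=1$ guarantees that each ball of radius~$1$ admits a set of at most $\alpha^d$ points piercing all the objects inside it. Since objects in an independent set are pairwise disjoint and each piercing point lies in at most one of them, every unit ball contains at most $\alpha^d$ objects of any independent set. This per-unit-ball bound plays the role of bounded ply in~\cite{BergBKMZ18}.

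\emph{Separator.}\quad Following~\cite{BergBKMZ18}, I would look for a ball $B(c,R)$ such that $B(c,R)$ and its complement each contain at least a constant fraction of the input, with $R = O((n/\alpha^d)^{1/d})$ (the choice of $R$ is forced because any set of $\Theta(n)$ disjoint objects occupies volume $\Omega(n/\alpha^d)$ by the per-unit-ball bound). Since each object has diameter at most~$1$, every object meeting $\partial B(c,R)$ lies in an annulus of width~$2$, which has $d$-dimensional volume $O(R^{d-1})$ and is covered by $O(R^{d-1})$ unit balls. Summing the per-unit-ball bound yields that the separator contains an independent set of size at most $O(R^{d-1}\alpha^d)=O(n^{1-1/d}\alpha)$.

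\emph{Divide-and-conquer.}\quad Let $S$ be the objects meeting $\partial B(c,R)$. The algorithm enumerates all independent subsets $I_S \subseteq S$ and for each one recursively computes \IS on the inside and outside instances after removing $N[I_S]$. Both recursive instances shrink by a constant factor, so the recurrence
\[
T(n) \leq 2^{O(n^{1-1/d}\alpha)}\cdot 2\, T((1-\eps)n) + \poly(n)
\]
solves to $T(n) = 2^{O(n^{1-1/d}\alpha)}$, as required.

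\emph{Main obstacle.}\quad The bound $O(n^{1-1/d}\alpha)$ controls the \emph{independent-set} size inside the annulus, not the cardinality $|S|$ itself, which may be as large as $n$ when many pairwise intersecting objects cluster near $\partial B$ (for instance, many overlapping copies of a single object are trivially $\alpha$-stabbed with $\alpha=1$). The technically delicate step is therefore to enumerate the $2^{O(n^{1-1/d}\alpha)}$ candidate independent subsets of $S$ within the stated time, without picking up an extra $\log n$ factor in the exponent. My plan is to branch at the level of the $O(n^{1-1/d}\alpha)$ piercing points of the annulus, deciding for each piercing point whether it is ``used'' and then picking one representative object through it, rather than enumerating subsets of $S$ directly. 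A similar clique-reduction step is used to make the balance argument for $c$ and $R$ robust against instances of arbitrarily high ply.
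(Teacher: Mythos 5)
Your high-level plan matches the paper's in spirit: a balanced geometric separator, a clique/piercing-point decomposition of the objects crossing the boundary, and a divide-and-conquer recurrence. However, the proof as written has a genuine gap, and it is precisely at the step you flag as the ``main obstacle.''

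The structural bound you prove is that any \emph{independent set} restricted to the annulus has size $O(n^{1-1/d}\alpha)$. What the divide-and-conquer template actually needs is that the \emph{number of independent subsets} of the separator is $2^{O(n^{1-1/d}\alpha)}$. These are not the same thing, and your proposed fix of branching at the piercing points does not close the gap. If the annulus is covered by $P=O(n^{1-1/d}\alpha)$ piercing points and the induced cliques $C_1,\dots,C_P$ partition the $m\leq n$ objects in the annulus, the enumeration you describe runs in time $\prod_i(|C_i|+1)=2^{\sum_i\log(|C_i|+1)}$. In the worst case ($|C_i|\approx m/P$ for all $i$, with $m=\Theta(n)$) this is $2^{\Theta(n^{1-1/d}\alpha\log(n^{1/d}/\alpha))}$, which still has a logarithmic factor in the exponent. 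Reading ``deciding used / picking one representative'' literally gives the even worse bound $2^{P}\cdot n^{P}=2^{O(n^{1-1/d}\alpha\log n)}$. So a \emph{single} well-chosen separator cannot be shown to have few enough independent subsets by the per-ball counting argument alone.

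The missing ingredient, and the heart of the paper's argument, is an averaging step over a \emph{family} of nested candidate separators. The paper takes the smallest hypercube $H_0$ containing a constant fraction of the objects, rescales it to unit size, and considers $n^{1/d}$ concentric hypercube shells $H_1,\dots,H_{n^{1/d}}$ with side lengths between $1$ and $3$. Each shell defines a separator, and each separator inherits a clique partition from a stabbing point set. Assigning each clique the weight $\log(|C|+1)$ (more generally any $\gamma(|C|)=O(|C|^{1-1/d})$), a geometric argument bounds the \emph{total} weight summed over all $n^{1/d}$ candidate separators by $O(n\alpha)$; pigeonhole then yields one separator of weight $O(n^{1-1/d}\alpha)$, i.e., one whose number of independent subsets is genuinely $2^{O(n^{1-1/d}\alpha)}$. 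This averaging is what eliminates the $\log$ factor and cannot be bypassed by re-ordering the enumeration.

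A secondary, smaller issue: you assert that the balanced radius $R$ satisfies $R=O((n/\alpha^d)^{1/d})$, but the stabbing bound only gives the lower bound $R=\Omega((n/\alpha^d)^{1/d})$ (for disjoint objects). If the input is sparse but spread out, a balanced ball can have much larger radius, in which case the annulus volume $O(R^{d-1})$ overshoots and the bound $O(R^{d-1}\alpha^d)$ on the separator's independent set is no longer $O(n^{1-1/d}\alpha)$. The paper avoids this by taking $H_0$ to be the \emph{smallest} hypercube containing $n/(6^d+1)$ objects, which simultaneously controls the scale and the balance (large objects are simply thrown into every separator), so the candidate shells all have side length $\Theta(1)$ after rescaling.

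In short: the divide-and-conquer skeleton and the piercing-point view of cliques are both in the right direction, but to reach $2^{O(n^{1-1/d}\alpha)}$ you need (i) the weighted clique accounting $\sum\log(|C_i|+1)$, not the cardinality of any one independent set, and (ii) the averaging over $n^{1/d}$ shells that guarantees some shell has small weighted cost. Neither is present in the proposal.
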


As mentioned, the stabbing number is at most $n^{1/d}$, and this
algorithm runs in subexponential time whenever the stabbing number is
better than this trivial upper bound, that is, whenever
$\alpha=o(n^{1/d})$ holds.


In order to have definite answers to the best running times achievable, we
also need a lower bound framework. A popular starting point in the past
decades is the Exponential Time Hypothesis (ETH)~\cite{ImpagliazzoP01}, which
posits that there exists a constant $\gamma>0$ such that there is no $2^{\gamma n}$ algorithm for the $3$-SAT problem. Classical NP-hardness reductions automatically yield
quantitative lower bounds on the running time under ETH. If enough care is taken to ensure that the constructed instance is sufficiently small, then one can find lower bounds that
match the best known algorithms~\cite{fptbook}. For the \IS problem, a lower
bound of $2^{\Omega(n)}$ is a consequence of classical reductions under ETH.

A standard way to explore the impact of a parameter such as fatness is to give
an algorithm where the parameter appears in the running time, together with a matching lower bound. However, the notion of ``matching lower bound'' needs to be defined precisely if we are expressing the running time as a function of two parameters, the size $n$ of the instance and the stabbing number $\alpha$ of the objects. 

A recent example of such an algorithm and lower bound involving two parameters is the paper by
Bir\'o~\etal~\cite{BiroBMMR17}, where it is shown that the coloring problem of
unit disk graphs with $\ell=n^\lambda$ colors can be solved in
$2^{O(\sqrt{n\ell}\log n)}$ time, where $\lambda\in [0,1]$ is a fixed
constant, and they also exclude algorithms of running time
$2^{o(\sqrt{n\ell})}$ under ETH. This is interesting since this smoothly
bridges the gap between a standard ``square root phenomenon'' algorithm ($\ell
= O(1))$) on one extreme and the brute force $2^{O(n)}$ on the other
($\ell=n^{1-o(1)}$). Our results show a similar behavior in the context of
fatness and the packing problem: the running time of
Theorem~\ref{thm:weightedalg} is optimal, with the
running time smoothly going from $2^{O(n^{1-1/d})}$ time 
in the case of $\alpha=O(1)$ to the trivial $2^{O(n)}$ time
of brute force when $\alpha=n^{1/d}$.

Let $\cG(d, L)$ denote the set of intersection graphs in $\Reals^d$ where each
object is an axis-parallel box whose side lengths form the multiset
$\{1,\dots,1,L\}$. Let us call such an axis-parallel box \emph{canonical}.
As usual, $n$ denotes the number of objects (the number of vertices in the graph).

For example, it is easy to see that $1\times 1\times L$ boxes have stabbing number
$O(L^{2/3})$. Any collection of $1\times 1\times L$ boxes of the same
orientation can be stabbed by the lattice generated by the vertices of such a
box, which has $O(L^2)$ points in a ball of radius $O(L)$. By taking the same
lattice for the two other orientations, we obtain a complete stabbing set of
size $O(L^2)$ inside a ball of radius $O(L)$ for all axis-parallel boxes of
this shape.
In general for $d\geq 3$, the stabbing number for canonical boxes is $\alpha=O(L^{1-1/d})$, so in
particular, for $L=1$ we have $\alpha = O(1)$, and for $L\geq n^{1/(d-1)}$
we have $\alpha = O(n^{1/d})$. In our main contribution, we show that this very restricted set of
non-fat objects is sufficient to prove the desired lower bound.

\begin{theorem}\label{thm:generallower}
Let $d\geq 3$ be fixed. Then there is a constant $\gamma>0$ such that for all $\alpha\in[1,n^{1/d}]$ it holds that
\textsc{Independent Set} on intersection graphs of $d$-dimensional canonical
axis-parallel boxes of stabbing number $\alpha$  has no algorithm running in time
$2^{\gamma n^{1-1/d}\alpha}$, unless ETH fails.
\end{theorem}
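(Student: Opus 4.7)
The plan is to reduce from $3$-SAT on $N$ variables and $M = O(N)$ clauses, which under ETH admits no $2^{o(N)}$ algorithm. Given a target $\alpha \in [1, n^{1/d}]$, I set the long side of the canonical box to $L = \Theta(\alpha^{d/(d-1)})$, so that by the previously noted analysis the family of $1\times\cdots\times 1 \times L$ boxes has stabbing number $\Theta(L^{1-1/d}) = \Theta(\alpha)$. The size $n$ is chosen so that $n^{1-1/d}\alpha = \Theta(N)$, i.e.\ $n = \Theta(N^{d/(d-1)}/L)$. The goal is a polynomial-time reduction producing an instance of \IS on $\cG(d,L)$ with $n$ boxes that is a yes-instance iff the $3$-SAT formula is satisfiable. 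A $2^{\gamma n^{1-1/d}\alpha}$ algorithm with sufficiently small $\gamma$ would then yield a $2^{o(N)}$ algorithm for $3$-SAT.

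The global layout places every box inside a region of shape $s \times s \times \cdots \times s \times T$ with $s = \Theta(N^{1/(d-1)})$ in the first $d-1$ coordinates and $T = \Theta(ML)$ along the last. All boxes have their long side axis-parallel. A variable is represented by a \emph{wire}: a maximal chain of $L$-long boxes sharing a unit cross-section, for which the two maximum independent sets encode the two Boolean values. A \emph{clause gadget} is a constant-volume arrangement that accepts iff at least one of its three incident wires carries the satisfying phase. A \emph{crossing gadget} lets two wires in perpendicular short directions pass over one another; since two perpendicular canonical boxes intersect only in a unit cross-section footprint, the crossing can be engineered locally using a bounded number of boxes that propagate phase correctly. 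These three gadget types are standard to design once the wires are in place; the real difficulty lies in laying out the wires.

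The decisive tool is the new wiring theorem announced in the abstract, which produces, given a bipartite interface between $N$ variable ports and $O(N)$ clause slots on the boundary of a $(d-1)$-dimensional grid of side $s = \Theta(N^{1/(d-1)})$, a routing whose total length is $O(N \cdot s) = O(N^{d/(d-1)})$ and whose local density (wires per unit cell) is bounded by a constant. Dividing the total wire length by $L$ gives $\Theta(N^{d/(d-1)}/L) = \Theta(n)$ boxes as required. The stabbing number of the resulting family must then be checked directly: for any ball $B$ of radius $r \in \roiv{L/2, L}$, a canonical box of diameter in $\roiv{r/2, r}$ meets $B$ only if its unit cross-section lies near $B$, and the density guarantee of the wiring theorem bounds the number of such boxes by $O(L^{d-1}) = O(\alpha^d)$; gadget regions are of constant diameter and so contribute at most a constant that can be absorbed into the $\alpha^d$-point stabbing set.

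The hard part will be the wiring theorem itself: routing $N$ wires in a $(d-1)$-dimensional grid of side $\Theta(N^{1/(d-1)})$ with total length $O(N^{d/(d-1)})$ while simultaneously keeping the pointwise wire density bounded by a constant is a genuine geometric obstacle. A naive embedding either inflates the bounding region — and with it $n$ — or clusters too many wires locally, inflating the stabbing number above the prescribed $\alpha$. Balancing these two competing constraints in all dimensions $d\geq 3$ is exactly what requires a dedicated combinatorial-geometric construction, and this is the step where I expect the bulk of the technical work to concentrate; once the wiring theorem is in hand, the gadget design and parameter bookkeeping follow a standard ETH reduction template.
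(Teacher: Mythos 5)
Your parameter choices ($L=\Theta(\alpha^{d/(d-1)})$, $n=\Theta(N^{d/(d-1)}/L)$) match the paper's, but the construction you sketch is not the paper's, and I think the gap is real rather than just a different route. The central device you are missing is the \emph{blow-up into cells}: the paper does not route $N$ individual variable-wires through a region while controlling their local density. Instead it first reduces $(3,3)$-SAT to \IS on a bounded-degree subgraph of the blown-up cube $\BEC^d(s,t)$ with $t=(L/8)^{d-1}$, so that each grid point becomes a clique of $t$ vertices, and then realizes such a subgraph by assigning each cell an $(L/8)\times\cdots\times(L/8)$ ``brick'' of $1\times\cdots\times1\times L$ boxes inside an $O(L)^d$ module. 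The wiring theorem that actually has to be proved is the Blown-up Cube Wiring theorem (Theorem~\ref{thm:becwire}): any matching between opposite facets of $\BEC^d(cn,t)$ is realizable by vertex-disjoint paths. Its proof is not a density-bounded geometric routing statement but the group-theoretic decomposition $Sym(A\times B)=G_A G_B G_A$ (Lemma~\ref{lem:symgroup}, Corollary~\ref{cor:symgroups}) combined with the original Cube Wiring theorem of~\cite{BergBKMZ18}; geometrically this lets an arbitrary matching between two adjacent cells be realized as a product of $2d-3$ ``parallel'' matchings, each a one-coordinate permutation of the $[L/8]^{d-1}$ index grid, so everything stays inside an $O(sL)^d$ cube with $sL=\Theta(N^{1/(d-1)})$. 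A constant-density routing theorem of the kind you describe would not, on its own, tell you how to pack $L^{d-1}$ parallel $1\times\cdots\times 1\times L$ boxes into an $O(L)^d$ region while being able to arbitrarily permute their indices between consecutive modules; that is exactly what the blow-up plus permutation decomposition delivers.

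Two further concrete problems. First, you propose crossing gadgets, but the paper uses none, and for good reason: in an \IS reduction a physical crossing of two wires inserts an unwanted edge between them, and the crossing-removal tricks from planar \IS do not apply here; in $d\ge 3$ crossings are simply unnecessary because there is enough room to route disjointly (which is what cube wiring gives). Second, your stated global layout $s^{d-1}\times T$ with $T=\Theta(ML)=\Theta(NL)$ contradicts your own box count: $N$ wires traversing a slab of thickness $\Theta(NL)$ yield $\Theta(N^2)$ boxes, not the $\Theta(N^{d/(d-1)}/L)$ you need; the paper's construction lives inside a $d$-cube of side $O(sL)=O(N^{1/(d-1)})$. (As a minor remark, you also do not need any density guarantee for the stabbing number: every family of axis-parallel $1\times\cdots\times1\times L$ boxes, however arranged, has stabbing number $O(L^{1-1/d})$ by the lattice argument in the introduction, independently of the routing.)
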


An immediate corollary is that the $2^{O(n)}$ time brute-force algorithm
cannot be improved, even for the intersection graph of axis-parallel boxes. This Corollary~\ref{cor:general} can also be derived from a simpler
construction by Chleb\'ik and
Chleb\'ikov\'a~\cite{DBLP:conf/soda/ChlebikC05}.

\begin{corollary}\label{cor:general}
Let $3\leq d \in \Nats$ be fixed.  Then \textsc{Independent Set} on
intersection graphs of axis-parallel boxes in $d$-dimensions has no algorithm
running in time $2^{o(n)}$, unless ETH fails.
\end{corollary}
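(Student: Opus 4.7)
The plan is to derive the corollary as a direct instantiation of Theorem~\ref{thm:generallower} at the extreme endpoint $\alpha = n^{1/d}$. At this endpoint the lower-bound exponent $n^{1-1/d}\alpha$ collapses to $n$, and the theorem rules out any algorithm of running time $2^{\gamma n}$ for a fixed constant $\gamma>0$. Since ruling out $2^{\gamma n}$ for some constant $\gamma$ is exactly what is needed to rule out $2^{o(n)}$ (any subexponential algorithm would eventually beat $2^{\gamma n}$ for any fixed $\gamma$), this is enough to conclude.

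The only thing to verify is that Theorem~\ref{thm:generallower} can actually be invoked with $\alpha = n^{1/d}$ for canonical boxes. This is exactly what the preceding discussion in the excerpt establishes: the canonical box with side-length multiset $\{1,\dots,1,L\}$ achieves stabbing number $\Theta(L^{1-1/d})$, and choosing $L \geq n^{1/(d-1)}$ forces $\alpha = \Theta(n^{1/d})$. So instances with the required stabbing parameter exist within the family $\cG(d, L)$, and the lower bound of Theorem~\ref{thm:generallower} applies.

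Finally, since every canonical axis-parallel box is in particular an axis-parallel box, the lower bound transfers from the restricted class $\cG(d, L)$ to the broader class of intersection graphs of arbitrary axis-parallel boxes in $\Reals^d$, giving the claimed $2^{o(n)}$ lower bound under ETH.

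There is essentially no obstacle here beyond the bookkeeping of plugging $\alpha = n^{1/d}$ into the exponent; the entire content of the corollary is hidden inside the hard direction of Theorem~\ref{thm:generallower}. The one tiny point to be careful about is that the theorem is stated uniformly over $\alpha \in [1, n^{1/d}]$, so the instance produced for the maximum value $\alpha = n^{1/d}$ already satisfies both the canonical-box constraint and the stabbing-number bound simultaneously, making the derivation a one-line substitution.
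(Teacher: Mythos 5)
Your derivation is exactly how the paper obtains the corollary: the paper simply asserts it as "an immediate corollary" of Theorem~\ref{thm:generallower}, and your substitution $\alpha = n^{1/d}$ (making the exponent $n^{1-1/d}\alpha = n$), combined with the observation that canonical boxes of side $L \geq n^{1/(d-1)}$ realize stabbing number $\Theta(n^{1/d})$ and are in particular axis-parallel boxes, is the intended one-line argument. The quantifier step is also sound: a single fixed $\gamma > 0$ for which $2^{\gamma n}$ is ruled out already excludes every $2^{o(n)}$ algorithm.
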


In unit ball graphs, there is a lower bound of $2^{\Omega(n^{1-1/d})}$ under
ETH, which of course carries over to intersection graphs of fat
objects~\cite{BergBKMZ18}. This latter reduction is based on establishing
efficient routing constructions (called the ``Cube Wiring theorem'') in the
$d$-dimensional Euclidean grid. The crucial insight of the present paper is
that tight lower bounds for {\em nonfat} objects can be obtained via \IS on
induced subgraphs of the $d$-dimensional {\em blown-up} grid cube, where each
vertex is replaced by a clique of $t$ vertices, fully connected to the
adjacent cliques in all $d$ directions. First we establish a lower bound for
\IS on  subgraphs of such cubes (even for subgraphs of maximum degree 3),
using and extending the Cube Wiring theorem~\cite{BergBKMZ18}. Unlike for unit
balls, it now seems difficult to realize every such  subgraph $G$ as
intersection graph of appropriate boxes. Instead, we realize a graph $G'$ that
is obtained from $G$ by some number of double subdivisions (subdividing some
edge twice). As every double subdivision is known to increase the size of the
maximum independent set by exactly 1, switching to $G'$ does not cause a
problem in the reduction.

The key insight of the reduction (in 3-dimensions) is that if $t=L^2$,
then $t$ vertices can be represented with $1\times 1 \times L$ size
boxes arranged in an $L\times L$ grid, occupying
$O(L)\times O(L)\times O(L)$ space. Each $t$-clique of the blown-up
cube is represented by such arrangements of boxes. The main
challenge that we have to overcome is that the subgraph $G$ may
contain an arbitrary matching between two adjacent $t$-cliques. Given
two sets of $1\times 1 \times L$ size boxes arranged in two
$L\times L$ grids, it seems unclear whether such arbitrary connections
can be realized while staying in an $O(L)\times O(L) \times
O(L)$ region of space. However, we show that this is possible, as the $L\times L$ grid
arrangement allows easy reordering within the rows or within the
columns, and it is known that any permutation of a grid can be
obtained as doing a permutation first within the rows, then within the
columns, and finally one more time within the rows. Thus with some
effort, it is possible to build gadgets representing $L\times L$
vertices in an $O(L)\times O(L)\times O(L)$ region of space that allows arbitrary
matchings to be realized with the adjacent gadgets.

The idea is similar in higher dimensions $d>3$. We reduce from the \IS
problem on a subgraph of the blow-up of a $d$-dimensional grid where
each vertex is blown-up into a clique of $L^{d-1}$ vertices. Each
gadget now contains $L^{d-1}$ boxes of size
$1\times 1 \times \dots \times 1 \times L$ arranged in a grid. In order
to implement arbitrary matchings between adjacent gadgets, we
decompose every permutation of the $(d-1)$-dimensional grid into $O(d)$ simpler
permutations that are easy to realize in $d$-dimensional space.

We also study the complexity of packing in the context of parameterized
algorithms: the question is how much one can improve the brute force
$n^{O(k)}$ algorithm for finding $k$ independent objects. We present
a counterpart of Theorem~\ref{thm:weightedalg} in this setting.
\begin{theorem}\label{thm:paramalg}
Let $\alpha\in \roiv{1,\infty}$ and $2\leq d \in \Nats$. There is a
parameterized algorithm that solves independent set for intersection graphs of
similarly sized $\alpha$-stabbed objects in $\Reals^d$ running in time
$n^{O(k^{1-1/d}\alpha)}$, where the parameter $k$ is the size of the maximum
independent set.
\end{theorem}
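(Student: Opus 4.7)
The plan is to establish a parameterized separator theorem tailored to a fixed optimum independent set $I^*$, and then combine it with the standard separator-based branching paradigm. Concretely, I would prove that for any independent set $I^*$ in an intersection graph of a similarly sized $\alpha$-stabbed family in $\Reals^d$, there exists a sphere $S$ such that the boundary of $S$ meets at most $O(|I^*|^{1-1/d}\alpha)$ members of $I^*$, while the interior and exterior each contain at most a constant fraction of $I^*$. This is the independent-set-sized analogue of the separator underlying Theorem~\ref{thm:weightedalg}.

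The key geometric fact behind the separator theorem is that in any similarly sized $\alpha$-stabbed family, a ball of radius $r$ contains at most $O(r^d \alpha^d)$ pairwise disjoint objects: cover the ball by $O(r^d)$ balls of radius proportional to the (bounded) common diameter and apply the $\alpha$-stabbing bound inside each. Consequently, a spherical shell of outer radius $R$ and thickness twice the object diameter, which contains every object crossed by a sphere of radius $R$, contains at most $O(R^{d-1}\alpha^d)$ members of $I^*$. A shifting argument in the spirit of~\cite{BergBKMZ18,SmithW98}, averaging $R$ over an interval of length $\Theta((|I^*|/\alpha^d)^{1/d})$ around a well-chosen center (e.g., a centerpoint of representative points of $I^*$), then yields a value of $R \approx (|I^*|/\alpha^d)^{1/d}$ at which both the balance condition and the cut bound $O(R^{d-1}\alpha^d) = O(|I^*|^{1-1/d}\alpha)$ are simultaneously met.

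With the separator in hand, the algorithm is a guess-and-recurse. One enumerates $n^{O(1)}$ candidate centers and radii so that at least one sphere witnesses the separator for a fixed optimum. For each candidate, one guesses the set $I^* \cap S$ among the $\binom{n}{O(k^{1-1/d}\alpha)} = n^{O(k^{1-1/d}\alpha)}$ subsets of the input of the appropriate size, commits to it, removes every remaining object that intersects a guessed one, and recurses on the interior and exterior instances, each with parameter at most $(1-c)k$ for some constant $c>0$. Because the exponent $k_i^{1-1/d}$ decreases geometrically down each branch, the recurrence $T(n,k) \le n^{O(k^{1-1/d}\alpha)} \cdot 2\, T(n,(1-c)k)$ telescopes to $n^{O(k^{1-1/d}\alpha)}$, matching the claimed bound.

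The main obstacle is proving the separator theorem with the cut controlled by $|I^*|$ rather than $n$: all prior ingredients live at scale $n$, and one must show that $|I^*|$ objects from an $\alpha$-stabbed similarly sized family are necessarily spread over a region of diameter $\Omega((|I^*|/\alpha^d)^{1/d})$, leaving enough room for a balanced sphere of that radius to exist. Once this geometric lemma is in place, the rest of the argument is routine and follows the standard parameterized separator-branching template.
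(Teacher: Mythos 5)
Your algorithm is identical in structure to the paper's: both fix an optimum independent set $I^*$, find a balanced sphere that crosses only $O(k^{1-1/d}\alpha)$ members of $I^*$, enumerate polynomially many candidate spheres, guess the crossing set, delete its neighbourhood, and recurse on the interior and exterior with the same telescoping recurrence. Both also hinge on the same key geometric fact, stated in slightly different language: you bound the number of pairwise disjoint objects in a ball of radius $r$ by $O(r^d\alpha^d)$, while the paper observes that the circumscribed balls of $I^*$ have ply $O(\alpha^d)$ (for similarly sized objects these are essentially equivalent). Where you diverge is in how the separator is obtained. The paper simply applies the sphere separator theorem of Miller, Teng, Thurston and Vavasis to the bounded-ply collection of circumscribed balls of $I^*$ and gets both the $O(p^{1/d}k^{1-1/d})$ cut bound and the $(d+1)/(d+2)$-balance for free. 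You instead propose to re-derive a balanced sphere separator from scratch via a shell-volume plus shifting argument. The cut bound by averaging over a radius window is indeed routine, but the balance half of a sphere-separator theorem is not a ``shifting'' fact: a centerpoint of representative points controls halfspace cuts, not balls, and Miller et al.\ need stereographic projection and a conformal map to a sphere to make the centerpoint idea yield a balanced \emph{ball} at every scale simultaneously. You flag this as the main obstacle, correctly, but the resolution is precisely the theorem you would be re-proving; it is cleaner (as the paper does) to extract the ply bound and cite the black box. One further small point: the candidate-sphere enumeration needs a concrete canonicalization to be $n^{O(1)}$ — the paper shrinks the sphere until it is determined by tangency to at most $d+1$ balls plus $O(d)$ sign bits — which you leave implicit.
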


If one regards the parameterized algorithm's running time in terms of
the instance size only, the result would be a
$2^{O(n^{1-1/d}(\log n)\alpha)}$ algorithm, which is slower than the
running time $2^{O(n^{1-1/d}\alpha)}$ provided by the latter
algorithm.  The parameterized algorithm is based on a separator
theorem by Miller~\etal~\cite{MillerTTV97}.

Finally, we sketch how the lower bound construction of Theorem~\ref{thm:generallower} can be adapted to a parameterized
setting, and obtain the following theorem:

\begin{theorem}\label{thm:paramlower}
Let $3\leq d \in \Nats$ be fixed. Then there is a constant $\gamma>0$ such that for all $\alpha\in[1,n^{1/d}]$ it holds that deciding if there is an 
independent set of size $k$ in intersection graphs of $d$-dimensional canonical
axis-parallel boxes of stabbing number $\alpha$  has no $f(k)n^{\gamma k^{1-1/d}\alpha/\log k}$
algorithm for any computable function $f$, unless ETH fails.
\end{theorem}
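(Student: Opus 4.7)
The plan is to adapt the reduction underlying Theorem~\ref{thm:generallower} to the parameterized setting by starting from a parameterized source problem whose ETH-based lower bound already carries the required $\log k$ slack. A convenient choice is \textsc{Partitioned Subgraph Isomorphism}, for which it is known that under ETH no algorithm decides it in time $f(k)\,n^{o(k/\log k)}$, where $k$ is the number of edges of the pattern graph; an analogous argument could be made starting from $k$-\textsc{Clique}, treating the extra $\log k$ factor as encoding overhead when packing $O(\log k)$ candidate vertices into each wiring slot.

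The reduction itself runs the geometric construction of Theorem~\ref{thm:generallower} essentially unchanged. Given a source instance of size $n_0$ with parameter $k_0$, we first encode it as \IS on an induced subgraph of the $d$-dimensional blown-up grid cube with grid side~$m$ and blow-up factor $t=L^{d-1}$. We set $L=\Theta(\alpha^{d/(d-1)})$ so that the target canonical-box graph has stabbing number $\Theta(\alpha)$, and pick $m$ just large enough that the Cube Wiring Theorem provides enough room to route the source graph, keeping $m^d = \Theta(n_0/\log n_0)$ so that the total number $N=\Theta(m^d L^{d-1})$ of boxes stays polynomial in $n_0$. The matching gadgets (three-phase row/column/row permutation decomposition within each blown-up clique) realize the necessary adjacencies between neighbouring cliques, and the double-subdivision bookkeeping ensures that the maximum independent set size is shifted by a fixed additive constant per subdivision. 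The resulting maximum independent set has size $k=\Theta(k_0)$.

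With these choices $\log N = \Theta(\log n_0)$ and $k=\Theta(k_0)$, a hypothetical canonical-box \IS algorithm of running time $f(k)\,N^{\gamma k^{1-1/d}\alpha/\log k}$ would, upon substitution, decide the source problem in time $f'(k_0)\,(n_0)^{o(k_0/\log k_0)}$, contradicting ETH. The range $\alpha\in[1,n^{1/d}]$ is covered by varying $L$ from $1$ up to the point at which each blown-up clique becomes as large as a single grid cell can hold.

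The main technical obstacle is the parameter accounting: one has to verify that the Cube Wiring overhead inflates $k$ by at most a constant factor (rather than, say, polynomially in $m$), that the stabbing number comes out exactly $\Theta(\alpha)$ for every admissible value of $\alpha$, and that the output box instance remains of size polynomial in $n_0$. All geometric subroutines—canonical-box gadgets, blown-up grid realization, three-phase matching gadgets, and the double-subdivision trick—are inherited verbatim from the proof of Theorem~\ref{thm:generallower}; the genuinely new work lies entirely in choosing the grid side~$m$ and blow-up factor~$L$ so that $k$ tracks the source parameter $k_0$ tightly while $\log N$ remains logarithmic in $n_0$, which is exactly what produces the $\log k$ denominator in the final lower bound.
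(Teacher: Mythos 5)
Your choice of \textsc{Partitioned Subgraph Isomorphism} as the source problem is correct and matches the paper, and you correctly observe that the $\log k$ denominator comes from the $f(k)\,n^{o(k/\log k)}$ lower bound for that problem. However, the core of your proposal---that the geometric construction of Theorem~\ref{thm:generallower} runs ``essentially unchanged'' and the only new work is in choosing the grid side $m$ and blow-up factor $L$---misses the decisive new idea of the paper's proof, and the reduction you describe does not actually encode the source problem.

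The construction of Theorem~\ref{thm:generallower} represents each vertex of the wiring graph by a single box, which encodes a binary present/absent choice; that is exactly what \IS needs. In the parameterized setting the instance must instead encode, for every pattern vertex $h_i$, a choice of one of $\bar{n}$ candidate vertices from the partition class $A_i$. Running the Theorem~\ref{thm:generallower} construction as is gives no mechanism for this. What the paper introduces is the notion of \emph{box tuples}: each slot in the wiring is replaced by a bundle of $\bar{n}$ mutually intersecting, tiny perturbations of a single canonical box, so that an independent set can select at most one box per tuple, and the index of the selected box encodes the chosen vertex in $[\bar{n}]$. On top of that, the reduction needs two entirely new gadgets---an \emph{equalizer} that forces all tuples assigned to the same pattern vertex to pick the same index (via a cycle of inequality-propagating tuples), and an \emph{edge-check} gadget with $O(\bar{n}^2)$ boxes, one per admissible edge $(a,b)\in S_{i,j}$, placed at the center of a cross so that it is compatible with the routed inequalities $\leq n_h$, $\geq n_h$, $\leq n_{h'}$, $\geq n_{h'}$. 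None of this appears in your proposal, and the ``double-subdivision bookkeeping'' you mention (inherited from the parity-fix gadget) is in fact \emph{dropped} in the parameterized construction, since it plays no role once boxes are replaced by tuples.

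Your parameter accounting is also off. The grid side is not $\Theta((n_0/\log n_0)^{1/d})$; in the paper it is $s = (k/t)^{1/d}$ with $t = (L/8)^{d-1}$ and $k = \Theta\bigl(\bar{k}^{d/(d-1)}/\alpha^{d/(d-1)}\bigr)$, chosen precisely so that $|P| = |Q| = s^{d-1}t = \Theta(\bar{k})$, giving exactly enough slots on the two facets for the $\Theta(\bar{k})$ pattern vertices and edges. The total number of boxes is then $n = O(\bar{n}k) + O(\bar{k}\bar{n}^2) = O(\bar{n}^3)$ because each tuple carries $O(\bar{n})$ or $O(\bar{n}^2)$ boxes; no ``keep $m^d = \Theta(n_0/\log n_0)$'' trick is involved. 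The $\log k$ factor appears only when substituting $k = \Theta(\bar{k}^{d/(d-1)}/\alpha^{d/(d-1)})$ into the hypothetical running time and comparing against the source lower bound; it is not produced by any grid-sizing choice.
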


The crucial difference is that we are reducing from the \textsc{Partitioned
Subgraph Isomorphism} problem instead of \IS, which means that instead of
choosing or not choosing a box (representing choosing or not choosing a vertex
in the \IS problem), the solution needs to choose one of $n$ very similar
boxes (representing the choice of one of $n$ vertices in a class of the
partition). The overall structure  of the reduction (e.g., routing in
the blown-up $d$-dimensional grid) is similar to the proof of
Theorem~\ref{thm:generallower}.

\noindent\emph{Organization.} In Section~\ref{sec:app_fatness} we establish some bounds that relate the stabbing number to fatness. Section~\ref{sec:alg} presents both our non-parameterized and parameterized algorithm. In Section~\ref{sec:wire} we prove the wiring theorem that is necessary for both of our lower bounds. Sections~\ref{sec:lower} and~\ref{sec:app_paramlower} contain our lower bounds for the non-parameterized and parameterized problem respectively. Finally, Section~\ref{sec:conclusion} draws some conclusions and proposes two open problems.

\section{The relationship between the stabbing number and fatness}\label{sec:app_fatness}

In the usual definition of fatness, an object $o\subset \Reals^d$ is $\alpha$-fat if there exists a ball of radius $\rho_\myin$ contained in $o$ and a ball of radius $\rho_\myout$ that contains $o$, where $\rho_\myin/\rho_\myout = \alpha$. For a fixed constant $\alpha$ this is a useful definition and unifies many other similar notions in case of convex objects, i.e., it holds that a set of convex objects that is constant-fat for this notion of fatness are constant-fat for more restrictive definitions and vice versa. For our purposes however this definition is not fine-grained enough in the following sense. The fatness of a $1\times 1 \times n$ box in three dimensions would be $\Theta(n)$, just as the fatness of a $1\times n \times n$ box. As it will be apparent in what follows, we need a fatness definition according to which $1\times n \times n$ boxes are much more fat than $1\times 1 \times n$ boxes. For this purpose, we use the following weaker definition of fatness, that tracks the volume compared to a circumscribed ball more closely. (Note that constant-fat objects are also weakly constant-fat.)

\begin{definition}[Weakly $\alpha$-fat]
A measurable object $o\subseteq \Reals^d$ is $\alpha$-fat for some $\alpha\in \roiv{1,\infty}$ if $Vol(o)/Vol(B)\leq \alpha^d$, where $Vol(o)$ and $Vol(B)$ denotes the volume of $o$ and the volume of its circumscribed ball $B$ respectively.
\end{definition}

An object $o$ is \emph{strongly $\alpha$-fat} if for any ball $B$ centered inside $o$ we have $Vol(B\cap o)/Vol(B)\geq \alpha^d$. In case of convex objects, weak fatness coincides with strong fatness up to constant factors, see~\cite{StappenHO93}.

The next theorem shows that the inverse of the weak fatness of an object family is related to the stabbing number. In a sense, this means that the stabbing number is a further weakening of weak fatness. Note that in our setting, the stabbing number will be polynomial in $n$ (i.e., $\alpha = n^{\lambda}$ for some constant $\lambda$), so the $\log n$ term is insignificant. 

\begin{theorem}
Let $d$ be a fixed constant. Then the stabbing number of any family of $n$
weakly $(1/\alpha)$-fat (measurable) objects in $\Reals^d$  is $O(\alpha \log^{1/d} n)$.
\end{theorem}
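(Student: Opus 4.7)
The plan is to prove the stated bound by a probabilistic construction: for each fixed radius $r$ and ball $\cB$ of radius $r$, I would exhibit a random set of $O(\alpha^d \log n)$ uniformly sampled points in $\cB$ that stabs every relevant object of $F$ with positive probability, and then convert the cardinality $\alpha^d \log n$ into the stabbing number by taking $d$-th roots.

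First I would fix an arbitrary $r>0$ and a ball $\cB \subseteq \Reals^d$ of radius $r$, and let $F_r \subseteq F$ denote the subfamily of objects with $\diam(o) \in \roiv{r/2, r}$ that are contained in $\cB$. Obviously $|F_r| \leq n$. The key geometric ingredient is a volume lower bound on each $o \in F_r$. Because $\diam(o) \geq r/2$, the circumscribed ball of $o$ has radius at least $r/4$ and hence volume at least $c_d r^d$ for some dimensional constant $c_d > 0$. The weak $(1/\alpha)$-fatness assumption applied to $o$ then yields $\Vol(o) \geq c_d r^d / \alpha^d$. Combining this with $\Vol(\cB) = c'_d r^d$ gives $\Vol(o)/\Vol(\cB) \geq \gamma / \alpha^d$ for a constant $\gamma = \gamma(d) > 0$ that depends only on $d$ and is independent of $r$ and of the particular object.

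Next I would sample $N$ points independently and uniformly at random from $\cB$. For a fixed $o \in F_r$, the probability that none of the sampled points lies in $o$ is at most $(1 - \gamma/\alpha^d)^N \leq e^{-\gamma N/\alpha^d}$. A union bound over the at most $n$ objects of $F_r$ gives that the probability of the sample failing to be a stabbing set for $F_r$ is at most $n \cdot e^{-\gamma N/\alpha^d}$. Choosing $N = \lceil (1/\gamma)\alpha^d \ln(n+1) \rceil$ makes this strictly less than $1$, so by the probabilistic method a stabbing set of cardinality $N = O(\alpha^d \log n)$ exists for this particular $(r,\cB)$. Since $r$ and $\cB$ were arbitrary, the family $F$ is $\beta$-stabbed for some $\beta$ with $\beta^d = O(\alpha^d \log n)$, i.e., $\beta = O(\alpha \log^{1/d} n)$.

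The main technical step is the volume lower bound; once the inequality $\Vol(o) = \Omega(r^d/\alpha^d)$ is in hand, the remainder is a routine random-sampling plus union-bound argument. Note that no structural assumption such as convexity is required, which is important because a deterministic grid-based stabbing construction would struggle with potentially non-convex or awkwardly shaped thin objects; the probabilistic argument only cares about the total volume of each object relative to $\cB$.
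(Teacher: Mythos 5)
Your proof is correct and takes essentially the same approach as the paper: a volume lower bound $\Vol(o) = \Omega(\Vol(\cB)/\alpha^d)$ from weak fatness, followed by uniform random sampling of $O(\alpha^d\log n)$ points and a union bound. The only cosmetic difference is that you bound $(1-\gamma/\alpha^d)^N$ via $e^{-\gamma N/\alpha^d}$ while the paper works directly with the floor-chosen exponent, and you spell out the conversion $\beta^d = O(\alpha^d\log n) \Rightarrow \beta = O(\alpha\log^{1/d} n)$ which the paper leaves implicit.
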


\begin{proof}
Consider a family $F$ of weakly $1/\alpha$-fat objects. Let $B$ be a ball of
radius $\delta$, and let $F_B$ be the set of objects contained in $B$ of
diameter at least $\delta/2$. It is sufficient to show that we can stab $F_B$
with $O(\alpha^d\log n)$ points. Pick $k=\floor{(4\alpha)^d(\log n+1)}$ points
$p_1,\dots p_k$ independently uniformly at random in $B$. For any given object
$o$, its volume is at least $\Vol(B)/(4\alpha)^d$, so the probability that a
given $p_i$ is not in $o$ is at most $1-1/(4\alpha)^d$. Since the $k$ points are
chosen independently, the probability that a given object $o$ is unstabbed is at most
$\left(1-\frac{1}{(4\alpha)^d}\right)^k$. By the union
bound, the probability that there is an unstabbed object is at most

\[n\left(1-\frac{1}{(4\alpha)^d}\right)^k
= n\left(1-\frac{1}{(4\alpha)^d}\right)^{\floor{(4\alpha)^d(\log n+1)}}
< n(1/e)^{\log n+1}<1.\]

Consequently, there exists an outcome where all objects are stabbed.
\end{proof}

We conclude this section with the following theorem, which shows an even stronger
connection between fatness and stabbing in case of convex objects. The theorem
uses the existence of the John ellipsoid~\cite{John2014} and
the $\eps$-net theorem~\cite{Haussler1987}.

\begin{theorem}\label{thm:convexstabbed}
Let $d$ be a fixed constant. Then the stabbing number of any family of $n$
weakly $(1/\alpha)$-fat convex objects in $\Reals^d$  is $O(\alpha
\log^{1/d} \alpha)$.
\end{theorem}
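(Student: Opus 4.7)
The plan is to strengthen the union-bound argument of the previous theorem by exploiting the bounded VC dimension of the range space of ellipsoids in $\Reals^d$, combined with the John ellipsoid to reduce arbitrary convex objects to ellipsoids. Fix any ball $B$ of radius $\delta$ and let $F_B\subseteq F$ be the subfamily of objects contained in $B$ with diameter in $\roiv{\delta/2,\delta}$. As in the previous proof it suffices to stab $F_B$ with $O(\alpha^d\log\alpha)$ points, since then by definition $F$ is $O(\alpha\log^{1/d}\alpha)$-stabbed.

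First, I would shrink each object to an ellipsoid. For a convex body $o\in F_B$, John's theorem gives an inscribed ellipsoid $E_o\subseteq o$ with $\Vol(E_o)\ge \Vol(o)/d^d$. Weak $(1/\alpha)$-fatness of $o$ together with the fact that $\diam(o)\ge \delta/2$ implies $\Vol(o)\ge c'_d\,\Vol(B)/\alpha^d$ for some constant $c'_d$ depending only on $d$ (using that the circumscribed ball of $o$ has radius at most $\delta$ but at least $\delta/4$). Hence every $E_o$ is an ellipsoid contained in $B$ of volume at least $c_d\Vol(B)/\alpha^d$, and any point stabbing $E_o$ also stabs $o$. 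So it is enough to produce a small hitting set for the collection $\{E_o:o\in F_B\}$.

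For this I would invoke the $\eps$-net theorem. Consider the range space $(B,\mathcal{E})$, where $\mathcal{E}$ is the family of all ellipsoids in $\Reals^d$ (intersected with $B$), equipped with the uniform probability measure on $B$. Ellipsoids are sublevel sets of quadratic polynomials in $d$ variables and are therefore described by $\binom{d+2}{2}$ parameters, so the range space has VC dimension $O(d^2)=O(1)$. Set $\eps\eqdef c_d/\alpha^d$; then every $E_o$ is an $\eps$-heavy range. By the $\eps$-net theorem of Haussler and Welzl there is an $\eps$-net $N\subseteq B$ with $|N|=O\!\left(\tfrac{1}{\eps}\log\tfrac{1}{\eps}\right)=O(\alpha^d\log\alpha)$, and every $\eps$-heavy range in $\mathcal{E}$ contains a point of $N$. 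In particular $N$ stabs all of the $E_o$'s, hence all of $F_B$.

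Applying this for each radius scale gives the desired $O(\alpha^d\log\alpha)$ stabbing bound and therefore stabbing number $O(\alpha\log^{1/d}\alpha)$. The only delicate step is the constant-factor bookkeeping that turns weak fatness plus the diameter constraint into a uniform lower bound on $\Vol(E_o)/\Vol(B)$; everything else is a direct appeal to John's theorem and the $\eps$-net theorem, both of which are cited in the excerpt.
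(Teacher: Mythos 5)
Your proof is correct and follows essentially the same route as the paper's: restrict to a ball $B$ and the objects of comparable diameter, use weak fatness to get a uniform $\Omega(\Vol(B)/\alpha^d)$ lower bound on object volume, replace each convex body by its John ellipsoid, and invoke the $\eps$-net theorem on the constant-VC-dimension range space of ellipsoids to get an $O(\alpha^d\log\alpha)$ hitting set. The only difference is that you spell out the constant-factor bookkeeping (circumscribed ball radius between $\delta/4$ and $\delta$) that the paper leaves implicit; this is a harmless elaboration, not a different argument.
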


\begin{proof}
Consider a family $F$ of weakly $1/\alpha$-fat convex objects. Let $B$ be a
ball of radius $\delta$, and let $F_B$ be the set of objects contained in $B$
of diameter at least $\delta/2$. It is sufficient to show that we can stab
$F_B$ with $O(\alpha^d\log \alpha)$ points. For any given object $o$, its
volume is at least $\Vol(B)/(4\alpha)^d$. Every convex object $o\in F_B$
contains an ellipsoid $\ell(o)\subseteq o$ such that $\Vol(o)/\Vol(\ell(o)) \leq
d^d$~\cite{John2014}. Since the VC-dimension of ellipsoids in $\Reals^d$ is
$O(d^2)$~\cite{DBLP:journals/corr/abs-1109-4347}, the $\epsilon$-net
theorem~\cite{Haussler1987} implies that the ellipsoids $\ell(o)\; (o\in F_B)$
can be stabbed by $O(\frac{d^2}{1/(4\alpha)^d}\log \frac{d^2}{1/(4\alpha)^d})
= O(\alpha^d\log \alpha)$ points. Since the ellipsoids are contained in their
respective objects, this point set also stabs all objects in $F_B$.
\end{proof}

\section{Algorithms}\label{sec:alg}

We require very little from the objects that we use in our algorithms. It is necessary that we can decide in polynomial time whether a point is contained in an object, whether two objects intersect, and whether an object intersects some given sphere, ball, and empty or dense hypercube. Let us assume that such operations are possible from now on.

\subsection{An algorithm with weighted cliques}

The algorithm for Theorem~\ref{thm:weightedalg} is an adaptation of the \IS algorithm for fat objects from~\cite{BergBKMZ18}, based on weighted cliques.

\begin{proof}[Proof of Theorem~\ref{thm:weightedalg}]
The algorithm works by finding a balanced separator of the objects, such that
the separator itself can be partitioned into cliques and this partition has
the property that the number of independent sets within the separator is
$2^{O(n^{1-1/d}\alpha)}$. The result then follows from applying this algorithm
recursively. Thus, we are left with the task to prove the existence of such a
separator.

We begin by picking a minimum size hypercube $H_0$ that contains at least
$n/(6^d + 1)$ objects, and we translate and scale everything
so that $H_0$ becomes a unit hypercube centered at the origin. We now define
$n^{1/d}$ hypercubes $H_1,\ldots,H_{n^{1/d}}$, which will be our candidate
separators. Each hypercube $H_i$ is centered at the origin and has edge length
$1 + \frac{2i}{n^{1/d}}$.

Each hypercube $H_i$ corresponds to a separator as follows: the separator
consists of the objects intersected by the boundary of the hypercube, and
separates the objects contained in the interior of the hypercube from those
that do not intersect it. To ensure that the separators are balanced, to each
separator we add all objects intersecting $H_{n^{1/d}}$ with diameter $\geq
1/4$. Note that these objects can be stabbed with $O(1)$ points, and therefore
do not contribute too many cliques to the partition.

\begin{lemma}
Each separator $H_i$ is balanced, in the sense that both the interior and exterior contain at most $\frac{6^d}{6^d+1}n$ objects.
\end{lemma}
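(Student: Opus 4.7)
The plan is to bound the exterior and the interior separately, using the minimality of $H_0$ to handle the latter.

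\emph{Exterior.} Since $H_0$ and $H_i$ are both centered at the origin and $H_i$ has edge length at least $1$, we have $H_0 \subseteq H_i$. Therefore every object contained in $H_0$ lies in $H_i$ and so does not fall in the exterior. Because $H_0$ contains at least $n/(6^d+1)$ objects, the exterior has at most $n - n/(6^d+1) = \frac{6^d}{6^d+1}n$ objects.

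\emph{Interior.} First note that every interior object must have diameter strictly less than $1/4$: such an object lies in $H_i \subseteq H_{n^{1/d}}$, so if its diameter were at least $1/4$ it would have been added to the separator rather than the interior. The plan is to cover $H_i$ by $6^d$ axis-aligned hypercubes each of edge length strictly less than $1$, arranged so that every object of diameter $<1/4$ contained in $H_i$ is contained in at least one covering cube. By the minimality of $H_0$, each such covering cube contains fewer than $n/(6^d+1)$ objects, so summing over the $6^d$ pieces of the cover gives at most $6^d \cdot n/(6^d+1) = \frac{6^d}{6^d+1}n$ interior objects.

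The main obstacle is constructing such a covering: a clean partition of $H_i$ (whose edge length is at most $3$) into $6^d$ hypercubes of edge $1/2$ fails because an object of diameter just below $1/4$ sitting near a face of one piece need not be fully contained in any single piece. I would instead take $6^d$ overlapping closed cubes whose centers form a regular $6\times\cdots\times 6$ axis-parallel grid centered at the origin, with common edge length $s$ slightly below $1$ and grid spacing $s - \tfrac{1}{2}$. A direct $\ell_\infty$ triangle inequality then shows simultaneously that these cubes cover $H_i$ and that, for any object $o\subseteq H_i$ of diameter $<1/4$ and any $x\in o$, every point of $o$ lies within $\ell_\infty$-distance $s/2$ of the grid point nearest $x$, so $o$ is contained in the corresponding cube. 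Since each of these cubes has edge length strictly less than that of $H_0$, the minimality of $H_0$ then yields the claimed bound.
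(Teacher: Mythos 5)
Your exterior argument is essentially identical to the paper's. For the interior, the paper takes a slightly different and cleaner route that avoids the parameter delicacy in your construction: it partitions $H_{n^{1/d}}$ into $6^d$ subcubes of side exactly $1/2$, and for each subcube counts the objects (of diameter $<1/4$) that \emph{intersect} it, not the objects contained in it. All such objects lie in the subcube inflated by the (finite) maximum object diameter, a hypercube of side strictly less than $1$, so minimality of $H_0$ bounds their number by $n/(6^d+1)$; summing over the $6^d$ subcubes gives the claim, since every object contained in $H_{n^{1/d}}$ intersects at least one subcube. Your objection that a clean partition into $6^d$ pieces ``fails because an object need not be fully contained in any single piece'' therefore does not apply: containment in the partition pieces is never used.

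Your alternative — an overlapping cover by $6^d$ cubes of side $s<1$ such that each object of diameter $<1/4$ lies entirely inside one cover element — is workable in principle, but the parameters as stated do not deliver what you claim. With six grid points per axis at spacing $s-\tfrac12$ centered at the origin, the outermost grid coordinate is $\tfrac{5}{2}(s-\tfrac12)$, so a point near the corner of $H_{n^{1/d}}$ (side $3$) is at $\ell_\infty$-distance $\tfrac{3}{2}-\tfrac{5}{2}(s-\tfrac12)=\tfrac{11}{4}-\tfrac{5s}{2}$ from its nearest grid point, which exceeds $\tfrac{s}{2}-\tfrac14$ for every $s<1$. After adding an object that extends almost $1/4$ further out, the $\ell_\infty$-distance to that grid point can exceed $s/2$, so the object is not contained in the corresponding cube and the triangle-inequality claim fails. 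The construction can be salvaged by choosing $s$ sufficiently close to $1$ as a function of the strict maximum diameter $D_{\max}<1/4$ over the (finitely many) objects — one needs roughly $s\ge\bigl(D_{\max}+\tfrac{11}{4}\bigr)/3$, which is $<1$ — but that dependence must be made explicit; with $s$ merely ``slightly below $1$'' independent of $D_{\max}$, the claim is not correct. The paper's intersection-based counting sidesteps this boundary issue entirely.
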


\begin{proof}
Due to the choice of $H_0$ the interior of each separator contains at least
$n/(6^d + 1)$ objects. Thus, the exterior of each separator contains at most
the claimed number of objects.

To see that the interior does not contain too many objects either, consider
the separator associated with $H_{n^{1/d}}$. Since all objects with radius
$\geq 1/4$ are contained in the separator, we only need to show that
$H_{n^{1/d}}$ contains at most $(n 6^d)/(6^d + 1)$ objects with radius $<
1/4$. Note that $H_{n^{1/d}}$ has side length $3$, and thus volume $3^d$.
Consider a subdivision of $H_{n^{1/d}}$ into $6^d$ sub-hypercubes of side
length $1/2$. The objects (of radius at most $1/4$) intersecting any such
given sub-hypercube are contained in a hypercube of edge length strictly less
than one. Note that $H_0$ is the smallest hypercube containing at least
$n/(6^d + 1)$ objects. Therefore, at most $n/(6^d + 1)$ objects intersect each
sub-hypercube, and thus $H_{n^{1/d}}$ contains at most $(n 6^d)/(6^d + 1)$
objects of radius $< 1/4$.
\end{proof}

Next, we show that among the separators $H_1,\ldots, H_{n^{1/d}}$, at least
one has a suitable partition into cliques. Consider a separator $S$ and a
partition of $S$ into cliques $\mathcal{C}(S)=C_1,\ldots,C_k$. Then the weight
of $S$ is $\Sigma_{C\in \mathcal{C}(S)} \gamma(|C|)$, where $\gamma$ is a
weight function and $|C|$ denotes the number of vertices of the clique $C$.
We set
$\gamma(n)=\log(n+1)$, but the result holds for any function
$\gamma(n)=O(n^{1-1/d})$.

Given a partition of $S$ into cliques, the number of independent sets in $S$
is at most
\[\prod_{C\in \mathcal{C}(S)} (|C| + 1) = 2^{\sum_{C\in \mathcal{C}(S)}
\log(|C| + 1)}.\]

We show that the total weight of all separators is
$O(n\alpha)$; since there are $n^{1-1/d}$ candidate separators, it follows that
there exists a separator with weight $O(n^{1-1/d}\alpha)$. Such a separator
therefore has $2^{O(n^{1-1/d}\alpha)}$ independent sets.

In the following, let $\beta$ denote the volume of the circumscribed ball of
the smallest object, and note that since the objects are similarly sized, all
circumscribed balls have the same volume up to a constant factor. Note that,
because $H_0$ contains $n/6^d$ objects and we performed a scaling such that
$H_0$ has size $1$, we have $\beta<1$.  We distinguish two cases: if
$\beta^{1/d} > {n^{-1/d}}$ or $\beta^{1/d} \leq {n^{-1/d}}$.

\begin{description}
\item[Case 1:] $\beta^{1/d} > {n^{-1/d}}$. \\

Since $\beta<1$, all balls intersecting the separator are contained in a
hypercube $O(1+\beta) = O(1)$, which can be covered by $O(1/\beta)$ balls of
volume $\beta$. By the definition of the stabbing number, it is possible to
stab all the objects intersecting the separators using $O(\frac{1}{\beta}
\alpha^d)$ points, and thus there is a partition of the objects into
$O(\frac{1}{\beta} \alpha^d)$ cliques, which we denote by $C_1,\dots,C_k$.

The total weight of the cliques $C_1,\dots,C_k$ is

\[\sum_{i=1}^k\gamma(|C_i|)=O(\sum_{i=1}^k (|C_i|)^{1-1/d}).\]

The right hand side here is maximized if the number of cliques is maximum
(i.e., we have $c\frac{1}{\beta} \alpha^d$ cliques for some constant $c$) and each
clique contains the same number of objects (i.e., $\frac{n}{c\frac{1}{\beta}
\alpha^d}$ objects). Furthermore, since the diameter of the union of objects in
any clique is $O(\beta^{1/d})$ and the distance between consecutive separators
is $1/n^{1/d}$, each clique contributes weight to at most
$O(\beta^{1/d}n^{1/d})$ separators. Therefore, the total weight (of all
separators) is at most
\[O\left(\beta^{1/d}n^{1/d} \cdot \frac{1}{\beta}
\alpha^d \cdot \gamma\left(\frac{n}{\frac{1}{\beta} \alpha^d}\right)\right) 
= O(n\alpha),\]
since $\gamma(t)=O(1^{1-1/d})$.
There are $n^{1/d}$ separators, thus at least one of them must have
weight at most $O(n^{1-1/d}\alpha)$.

\item[Case 2:] $\beta^{1/d} \leq {n^{-1/d}}$.\\
Each clique contributes to
the weight of at most $O(1)$ separators. The total weight of all separators is
then at most a constant times the total weight of the cliques. This can be
upper bounded by $O(\sum_{i=1}^k |C_i|^{1-1/d})$, which by the concavity of
$x^{1-1/d}$ is $O(n)$. Thus, there is a separator with weight at most
$O(n^{1-1/d})$.\qedhere
\end{description}
\end{proof}

\subsection{A parameterized algorithm with a sphere separator}

To prove Theorem~\ref{thm:paramalg}, we use the following separator theorem, due to Miller et al. \cite{MillerTTV97}. The \emph{ply} of a set of objects in $\Reals^d$ is the largest number $p$ such that there exists a point $x\in \Reals^d$ which is contained in $p$ objects. 

\begin{theorem}[Miller et al. \cite{MillerTTV97}]\label{thm:millersep}
Let $\Gamma=\{B_1,\ldots,B_n\}$ be a collection of $n$ closed balls in $\Reals^d$ with ply at most $p$. Then there exists a sphere $S$ whose boundary intersects at most $O(p^{1/d} n^{1-1/d})$ balls, and the number of balls in $\Gamma$ disjoint from $S$ that fall inside and outside $S$ are both at most $\frac{d+1}{d+2}n$.
\end{theorem}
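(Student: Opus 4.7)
The plan is to prove this classical theorem via the stereographic lifting and random great hypersphere technique of Miller, Teng, Thurston and Vavasis. First, I would stereographically lift the ball centers $c_1,\dots,c_n$ from $\Reals^d$ to points on the unit sphere $S^d\subset \Reals^{d+1}$ via a map $\pi$. Because stereographic projection is conformal and sends Euclidean spheres to spherical caps, each ball $B_i$ lifts to a cap $\pi(B_i)\subseteq S^d$, and the ply assumption on $\Gamma$ transfers directly: no point of $S^d$ lies in more than $p$ of these caps.

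Next, I would invoke the classical centerpoint theorem on the lifted centers $\pi(c_i)$: there is a point $q$ in their convex hull such that every closed halfspace through $q$ contains at most a $\tfrac{d+1}{d+2}$ fraction of the $\pi(c_i)$. Composing with a M\"obius transformation of $\Reals^d\cup\{\infty\}$ (lifted to a conformal self-map of $S^d$) that sends $q$ to the origin of $\Reals^{d+1}$, I may assume $q=0$. M\"obius maps send Euclidean balls to Euclidean balls and preserve containment, so the ply and the structure of $\Gamma$ are untouched, while every hyperplane through the origin now splits the lifted centers into parts of size at most $\tfrac{d+1}{d+2}n$.

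I would then sample a uniformly random hyperplane $H$ through the origin. Its intersection with $S^d$ is a great $(d{-}1)$-sphere, and pulling back under $\pi^{-1}$ gives a sphere $S\subseteq \Reals^d$ (the measure-zero hyperplane case is handled by a small perturbation). Conformality of $\pi$ ensures that $S$ cuts $B_i$ exactly when $H$ cuts the cap $\pi(B_i)$, and that disjointness from $S$ transfers to the two sides of $H$, so the balance bound $\tfrac{d+1}{d+2}n$ on each side is automatic for every such $H$.

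The main obstacle is to bound the expected number of balls cut by $S$. For a cap of angular radius $\theta$, a random great $(d{-}1)$-sphere cuts it with probability $\Theta(\min(\theta,1))$, while its surface area on $S^d$ is $\Theta(\min(\theta,1)^d)$; hence the cutting probability for $B_i$ is $O(\mathrm{area}(\pi(B_i))^{1/d})$. Linearity of expectation together with the concavity of $x\mapsto x^{1/d}$ and the ply bound $\sum_i \mathrm{area}(\pi(B_i))\leq p\cdot \mathrm{area}(S^d)=O(p)$ then yields
\[
\mathbb{E}\bigl[\#\text{cut balls}\bigr]
\;=\; O\!\Bigl(\sum_{i=1}^n \mathrm{area}(\pi(B_i))^{1/d}\Bigr)
\;\leq\; O\!\Bigl(n^{1-1/d}\bigl(\textstyle\sum_i \mathrm{area}(\pi(B_i))\bigr)^{1/d}\Bigr)
\;=\; O\bigl(n^{1-1/d}p^{1/d}\bigr).
\]
A deterministic choice of $H$ achieving at most this expectation (and which automatically satisfies the balance condition above) gives the claimed sphere $S$.
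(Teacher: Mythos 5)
First, note that the paper does not prove this statement at all: it is cited verbatim from Miller, Teng, Thurston and Vavasis, so your proposal can only be measured against the proof in that reference -- which is indeed the argument you are reconstructing (stereographic lifting, centerpoint, conformal normalization, random great hypersphere). The probabilistic half of your sketch is sound: the cutting probability $O(\mathrm{area}(\pi(B_i))^{1/d})$ for a cap, the bound $\sum_i \mathrm{area}(\pi(B_i)) \leq p\cdot\mathrm{area}(S^d)$ obtained by integrating the ply over the sphere, the concavity step, and the observation that the balance condition holds for \emph{every} candidate great sphere (so a sphere achieving the expected cut count automatically inherits balance) are all correct, as is reducing the balance condition on balls disjoint from $S$ to counting lifted centers in open halfspaces.

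The genuine gap is in the conformal-normalization step. You claim that after composing with a M\"obius self-map of $S^d$ ``sending $q$ to the origin,'' every hyperplane through the origin splits the lifted centers $\tfrac{d+1}{d+2}$-evenly, and you justify this by saying M\"obius maps send balls to balls and preserve containment. That justification cannot work: the centerpoint property is a statement about Euclidean halfspaces and convex hulls in $\Reals^{d+1}$, and these are not M\"obius-invariant, so moving $q$ to the origin does not by itself make the origin a centerpoint of the image points. Concretely, for a hyperbolic translation of the ball, the preimages of the central hyperplane sections of $S^d$ are the hyperplane sections through a certain point on the translation axis, and that point is in general \emph{not} the point which the map carries to the origin; so the map you specify is mis-calibrated and the balance claim does not follow from what you state. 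What Miller et al.\ actually prove -- and what your sketch needs as a lemma -- is that one can choose the rotation together with the stereographic ``dilatation'' parameter so that the origin becomes a centerpoint of the \emph{image} points, equivalently so that preimages of great spheres are exactly the hyperplane sections through the original centerpoint. This calibration is the technical heart of their proof and must be argued separately (it is not a formal consequence of conformality); once that lemma is inserted, the rest of your argument goes through. A further small inaccuracy: a ball whose image cap contains the projection pole corresponds downstairs to a halfspace or the complement of a ball, not a ball; this is harmless here because your counting is done entirely on $S^d$.
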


We can now prove Theorem~\ref{thm:paramalg}.

\begin{proof}[Proof of Theorem~\ref{thm:paramalg}]
Let $F$ be the set of similarly-sized objects with stabbing number $\alpha$
defining the intersection graph.
Consider the set of balls $B$ made up by the circumscribed balls of the objects of $F$ that are in a
maximum independent set.  We claim that the ply of this set is $O(\alpha^d)$.
To prove the claim, let $S$ be a subset of the independent set whose circumscribed balls overlap
at a point $x\in \Reals^d$. Since the objects are similarly sized, $S$ must
lie within a ball centered at $x$ whose radius is at
most a constant times the diameter of the largest object. Thus, $S$ can be
stabbed by $O(\alpha^d)$ points. However, as $S$ forms an independent set,
each point can only stab at most one object from $S$. Therefore,
$|S|=O(\alpha^d)$.

By Theorem~\ref{thm:millersep} the ball set $B$ has a
$\frac{d+1}{d+2}$-balanced sphere separator, where the sphere intersects $O((\alpha^d)^{1/d}
k^{1-1/d})$ $= O(k^{1-1/d}\alpha)$ balls. We proceed by guessing such a sphere,
but in order to do that, we need to define a polynomially large set of spheres
to guess from.

All that is important about a sphere $\sig$ is the separation that it performs
on $B$, that is, it splits $B$ to the set of balls inside, the set of balls
outside, and the set of balls intersected by $\sig$. Given an arbitrary sphere $\sig$, we shrink it while we can without making it disjoint from any of the originally intersected balls, or until a new ball is touched that was inside the sphere originally. As a result, we get a \emph{canonical} sphere $\sig'$ that is tangent to some set of balls from $B$. Note that such spheres can be uniquely defined by a set of at most $d+1$ tangent balls, and a string that for each of these balls describes if they are inside or outside $\sig'$. In order to define $\sig$, we add another bit for each touching ball, which is set if and only if the ball was originally not intersected by $\sig$. Therefore, the number of guesses we can make for $\sig$ is $n^{d+1}4^{d+1}$. Notice that the guess defines the sets of balls
inside, outside and intersected by $\sig$ as well.

After guessing $\sig$, we proceed by guessing which of the objects intersected by $\sig$ are
in the solution, and remove the remaining objects intersected by $\sig$.
Since at most $O(k^{1-1/d}\alpha)$ of the intersected objects are in the
solution, there are $n^{O(k^{1-1/d}\alpha)}$ possibilities for this guess.

From the remaining objects, we remove those that are adjacent to the objects
guessed to be in the solution, and recurse on the objects inside 
$\sig$ and on the objects outside $\sig$ separately. The running time
$T(n,k)$ for this algorithm satisfies the recurrence (for fixed $d$):

\[T(n,k) = n^{O(k^{1-1/d}\alpha)} \cdot T\left(n, k \cdot \frac{d+1}{d+2}\right)\]

which implies the running time $T(n,k)=n^{O(k^{1-1/d}\alpha)}$. 
\end{proof}


For arbitrary size objects that are $O(1)$-fat in some stronger sense (or just $O(1)$-stabbed), we can apply the above scheme of guessing a separating sphere or hypercube, and use one of the many separator theorems designed for objects of small ply. See~\cite{SmithW98,Chan03,Har-PeledQ17}. One can also apply~\cite{BergBKMZ18} since in case of ply $1$, the weights are constants; although the theorem is stated for the usual notion of fatness, the proof itself uses only the stabbing number. We get the following theorem.

\begin{theorem}
Let $2\leq d \in \Nats$. There is a parameterized algorithm that solves
\IS for intersection graphs of $O(1)$-stabbed
objects in $\Reals^d$ running in time $n^{O(k^{1-1/d})}$, where the parameter
$k$ is the size of the maximum independent set.
\end{theorem}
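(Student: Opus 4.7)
The plan is to follow the scheme of the proof of Theorem~\ref{thm:paramalg} verbatim, replacing Miller et al.'s sphere separator (which requires the input to be a bounded-ply family of balls, something we can no longer ensure for arbitrarily sized circumscribed balls of the independent set) by a separator theorem that operates directly on fat or $O(1)$-stabbed objects of arbitrary sizes. Concretely, let $I^*$ be an optimum independent set, $|I^*| = k$. The objects of $I^*$ are pairwise disjoint, so they form a family of ply~$1$, and they inherit the $O(1)$-stabbing property. Any of the separator theorems of Smith--Wormald~\cite{SmithW98}, Chan~\cite{Chan03}, Har-Peled--Quanrud~\cite{Har-PeledQ17}, or of~\cite{BergBKMZ18} specialized to ply~$1$ (in which each clique in the weighted partition has size $1$ and hence contributes only constant weight) then produces a separating sphere or hypercube $\sig$ which intersects at most $O(k^{1-1/d})$ objects of $I^*$, while each of the interior and the exterior of $\sig$ contains at most $c\,k$ objects of $I^*$ for some fixed constant $c<1$.

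Since $I^*$ is unknown, we guess $\sig$ from a polynomial-size family of \emph{canonical} separators, analogously to the argument in the proof of Theorem~\ref{thm:paramalg}: shrink or translate the hypothetical $\sig$ until it is tangent to a constant number of input objects, so that the tangency pattern together with $O(1)$ auxiliary bits (encoding, for each tangent object, whether it lies inside $\sig$ or was originally intersected by it) determines $\sig$ uniquely. This yields $n^{O(1)}$ candidates. For each candidate we then guess which of the objects intersecting $\sig$ belong to the solution: at most $O(k^{1-1/d})$ of them lie in $I^*$, so there are $n^{O(k^{1-1/d})}$ possibilities. We discard every object intersecting $\sig$ that is not guessed to be in the solution, together with every object adjacent to a guessed solution object, and recurse on the interior and the exterior subproblems separately, each with parameter at most $c\,k$. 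The resulting recurrence
\[
T(n,k) \;=\; n^{O(k^{1-1/d})} \cdot T(n,\, c\,k)
\]
solves to $T(n,k) = n^{O(k^{1-1/d})}$.

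The main obstacle, and the only step requiring genuine care, is the canonicalization: we must verify that a polynomially-sized family of separators always contains one that witnesses the separation claimed for the unknown $I^*$. This relies on the fact that in each of the cited separator theorems the separator surface is determined by a constant number of input objects plus $O(1)$ bits of combinatorial information, so that explicitly enumerating all such canonical shapes is tractable. Once this is in place, the remaining parts of the argument --- bounding the number of guesses and unfolding the recurrence --- are identical to those used in the proof of Theorem~\ref{thm:paramalg}.
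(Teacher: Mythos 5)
Your proposal matches the paper's proof essentially verbatim: the paper likewise observes that for arbitrary-size $O(1)$-stabbed objects one cannot bound the ply of circumscribed balls, and instead applies the scheme of Theorem~\ref{thm:paramalg} (guess a canonical separating sphere or hypercube, guess the $O(k^{1-1/d})$ solution objects it crosses, recurse) using a small-ply separator theorem from the very same list of references --- Smith--Wormald, Chan, Har-Peled--Quanrud, or the weighted-clique separator of~\cite{BergBKMZ18} specialized to ply $1$. The argument and the resulting recurrence are identical.
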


\section{Wiring in a blowup of the Euclidean Cube}\label{sec:wire}

Our wiring theorem relies on the folklore observation that can be informally stated the following way: an $n \times m$ matrix can be sorted by first permuting the elements within each row, then permuting the elements within each column, and then permuting the elements in each row again. Note that the permutations are independent of each other, and they are not sorting steps; the permutations required are quite specialized. We state the lemma in a more group-theoretic setting. Let $Sym(X)$ denote the symmetric group on the set $X$.

\begin{lemma}[Lemma 4 of~\cite{abert2002symmetric}]\label{lem:symgroup}
Let $A$ and $B$ be two finite sets. Then $Sym(A \times B) = G_AG_BG_A$, where
$G_A$ is the subgroup of $Sym(A \times B)$ consisting of permutations $\pi$
where $\pi(a,b) \in A\times \{b\}$ for all $(a,b)\in A\times B$, and $G_B$ is
the subgroup of $Sym(A \times B)$ consisting of permutations $\pi$ where
$\pi(a,b) \in \{a\}\times B$ for all $(a,b)\in A\times B$.
\end{lemma}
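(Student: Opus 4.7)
The plan is to reduce the decomposition to an edge-coloring question on a regular bipartite multigraph. Visualize $A \times B$ as a grid with ``rows'' $A \times \{b\}$ and ``columns'' $\{a\} \times B$; then $G_A$ acts by independently permuting within each row while $G_B$ acts by independently permuting within each column. Given $\pi \in Sym(A \times B)$, I want $\rho_1, \rho_2 \in G_A$ and $\sigma \in G_B$ with $\pi = \rho_1 \sigma \rho_2$. The strategy is to pick the first row permutation $\rho_2$ so cleverly that after applying it, the column permutation $\sigma$ can drop every element into its correct target row in one shot; the final row permutation $\rho_1$ then fixes the column within each target row.

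To choose $\rho_2$, I would build a bipartite multigraph $H$ with vertex parts $B_{\mathrm{src}}$ and $B_{\mathrm{tgt}}$ (two copies of $B$), and one edge from $b \in B_{\mathrm{src}}$ to $b'' \in B_{\mathrm{tgt}}$ for every $(a,b)$ whose image $\pi(a,b)$ lies in target row $b''$. Each source row emits $|A|$ edges and each target row receives $|A|$ edges, so $H$ is $|A|$-regular. By K\"onig's edge coloring theorem for bipartite multigraphs, $H$ admits a proper edge coloring with colors indexed by $A$; equivalently, its edges partition into $|A|$ perfect matchings between $B_{\mathrm{src}}$ and $B_{\mathrm{tgt}}$. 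I define $\rho_2(a,b) = (c,b)$ where $c \in A$ is the color assigned to the edge corresponding to $(a,b)$. This is indeed a permutation of row $b$ because the $|A|$ edges at $b$ collectively use all $|A|$ colors.

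After $\rho_2$, consider column $c$: it contains exactly one element from each source row, and because color class $c$ is a matching, those $|A|$ elements are destined (under $\pi$) for pairwise distinct target rows. Hence a single permutation $\sigma \in G_B$, acting independently on each column, can send every element to its correct target row. At this stage, every element sits in the correct target row but possibly in the wrong column; a final row permutation $\rho_1 \in G_A$ fixes the column in each target row, giving $\rho_1 \sigma \rho_2 = \pi$.

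The one nontrivial step is the edge-coloring step, where I rely on K\"onig's theorem; the rest is purely a bookkeeping translation between permutations of rows/columns and the combinatorics of the coloring. A minor subtlety worth flagging is that $H$ is genuinely a multigraph (whenever two elements of one source row share a target row, parallel edges appear), but K\"onig's result on chromatic index extends verbatim to bipartite multigraphs, so this causes no difficulty.
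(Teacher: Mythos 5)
The paper states this lemma as a black-box citation to Lemma~4 of the Ab\'ert reference and gives no proof of its own, so there is no in-paper proof to compare against; your argument is a free-standing proof of the cited fact. It is correct, and it is essentially the classical route: the decomposition $Sym(A\times B)=G_A G_B G_A$ is exactly the Slepian--Duguid/Bene\v{s} rearrangeability statement, and the standard proof is the one you give, reducing to an $|A|$-regular bipartite multigraph on two copies of $B$ and invoking K\"onig's edge-colouring theorem (equivalently, iterated Hall) to split it into $|A|$ perfect matchings. Your bookkeeping is sound: $\rho_2$ is a genuine row permutation because a proper edge colouring uses all $|A|$ colours at each vertex $b$; $\sigma$ is a genuine column permutation because each colour class is a bijection from $B_{\mathrm{src}}$ to $B_{\mathrm{tgt}}$; and $\rho_1$ exists because $\pi$, being a bijection, populates each target row $b''$ with $|A|$ items having $|A|$ distinct first coordinates. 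Two minor remarks. First, in ``those $|A|$ elements are destined for pairwise distinct target rows,'' the count should be $|B|$: column $c$ has one cell per source row, hence $|B|$ elements, which is consistent with colour class $c$ being a perfect matching between two copies of $B$. Second, if you wish to avoid naming K\"onig's theorem you can instead repeatedly apply Hall's marriage theorem to peel off one perfect matching at a time from the regular bipartite multigraph; either way the decomposition into $|A|$ matchings is standard, and your flag that the multigraph case causes no trouble is accurate.
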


\begin{corollary}\label{cor:symgroups}
Let $2 \le d \in \Nats$ and let $A_1,A_2,\dots,A_d$ be finite sets. Then $\Gamma \eqdef Sym(A_1 \times A_2 \times \dots \times A_d)$ is of the form $\Gamma=G_1G_2\dots G_{d-1}G_dG_{d-1}G_{d-2}\dots G_1$, where $G_i$ is the subgroup of $\Gamma$ consisting of permutations $\pi$ where $\pi(a_1,\dots,a_i,\dots,a_d) \in \{a_1\}\times \dots \times \{a_{i-1}\} \times A_i \times \{a_{i+1}\} \times \dots \times\{a_d\} $ for all $(a_1,\dots,a_d)\in \Gamma$.
\end{corollary}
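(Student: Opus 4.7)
The plan is to proceed by induction on $d$. The base case $d=2$ is exactly Lemma~\ref{lem:symgroup}. For the inductive step I would apply Lemma~\ref{lem:symgroup} to the product $A_1 \times (A_2 \times \dots \times A_d)$, obtaining $\Gamma = H_A H_B H_A$, where $H_A$ consists of permutations fixing the last $d-1$ coordinates (so $H_A = G_1$ in the notation of the corollary) and $H_B$ consists of permutations fixing the first coordinate. It then suffices to show the containment $H_B \subseteq G_2 G_3 \cdots G_{d-1} G_d G_{d-1} \cdots G_2$, from which the full decomposition $\Gamma = G_1 G_2 \cdots G_d \cdots G_2 G_1$ follows immediately by substitution.

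To establish this containment, note that any $\pi \in H_B$ acts independently on each ``slice'' $\{a_1\} \times A_2 \times \dots \times A_d$ as some permutation $\sigma_{a_1}$ of $A_2 \times \dots \times A_d$. By the inductive hypothesis applied to this $(d-1)$-fold product, each $\sigma_{a_1}$ admits a factorization $\sigma_{a_1} = g_2^{(a_1)} g_3^{(a_1)} \cdots g_d^{(a_1)} \cdots g_2^{(a_1)}$ in which $g_i^{(a_1)}$ moves only the $i$-th coordinate. For each layer index $i$ in this palindromic sequence I would then define a global permutation $\hat g_i$ that acts as $g_i^{(a_1)}$ on the slice indexed by $a_1$. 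By construction $\hat g_i$ fixes the first coordinate and only moves the $i$-th coordinate, so $\hat g_i \in G_i$; since every $\hat g_j$ preserves the first coordinate, the composition $\hat g_2 \hat g_3 \cdots \hat g_d \cdots \hat g_2$ evaluated at $(a_1, x)$ yields $(a_1, \sigma_{a_1}(x)) = \pi(a_1, x)$, as required.

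The main thing to get right is the choice of how to split the product when applying Lemma~\ref{lem:symgroup}. The seemingly natural grouping $A_1 \times \dots \times A_{d-1}$ versus $A_d$ places $G_d$ on the outside and nests the two inductive palindromic decompositions around it, producing the wrong index pattern with excessive repetitions of $G_1, \dots, G_{d-1}$. Splitting off $A_1$ instead makes the inductive palindrome over indices $2, 3, \dots, d, \dots, 2$ appear in the middle, and wrapping it with $G_1 = H_A$ on each side produces exactly the desired palindrome. Once this setup is chosen, the remaining work — gluing the slicewise factorizations into global factors that lie in the claimed subgroups — is essentially bookkeeping, since the $\hat g_i$ built from different slices do not interfere with one another.
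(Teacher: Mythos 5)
Your proof is correct and follows essentially the same inductive route as the paper: split off $A_1$, apply Lemma~\ref{lem:symgroup} with $A=A_1$ and $B=A_2\times\dots\times A_d$, and reduce the middle factor $G_B$ to the palindrome $G_2\cdots G_d\cdots G_2$ via the inductive hypothesis. You are slightly more careful than the paper's terse proof in one spot: the subgroup of $\Gamma$ fixing the first coordinate is a direct product of copies of $\mathrm{Sym}(A_2\times\dots\times A_d)$ indexed by $A_1$, not that symmetric group itself, so your explicit ``slicewise'' gluing of the factorizations $g_i^{(a_1)}$ into global elements $\hat g_i\in G_i$ is exactly the step needed to make the paper's ``by induction, $G_{A_2\times\dots\times A_d}=G_2\cdots G_d\cdots G_2$'' rigorous.
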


\begin{proof}
We use induction on $d$; for $d=2$, the statement is equivalent to Lemma~\ref{lem:symgroup}. Let $d\ge 3$. We can write $\Gamma$ as $Sym\big((A_1\times \dots \times A_{d-1})\times A_d\big)$, so by induction (for $d=2$), we have that $\Gamma=G_1 \times G_{A_2\times \dots \times A_{d}} \times G_1$. By induction, we also have that $G_{A_2\times \dots \times A_{d}}=G_2\dots G_{d-1}G_dG_{d-1}G_{d-2}\dots G_2$, therefore $\Gamma=G_1G_2\dots G_{d-1}G_dG_{d-1}G_{d-2}\dots G_1$.
\end{proof}

For an integer $n$, let $[n]=\{1,\dots, n\}$. Let $\EC^d(n)$ be the $d$-dimensional Euclidean grid graph whose vertices are $[n]^d$, and
$x,y\in V(G)$ are connected if and only if they are at distance $1$ in $\Reals^d$. For $x \in \Ints^d$ and $S \subset \Ints^d$, we use the shorthand $x+ S \eqdef \{ x+ y \;|\; y \in S\}$.
Let $\BEC^d(n,t)$ denote the $t$-fold blowup of $\EC^d(n)$, where all vertices of $\EC^d(n)$ are exchanged with a clique of size $t$, and vertices in neighboring cliques are connected. More precisely,
\begin{align*}
V(\BEC^d(n,t)) &= [n]^d\times [t]\\
E(\BEC^d(n,t)) &= \big\{  (x,i)(y,j) \;\big|\; x = y \vee (x,y) \in E(\EC^d(n))\big\}.
\end{align*}

Our second key ingredient is the Euclidean Cube Wiring theorem.

\begin{theorem}[Theorem 21 in~\cite{BergBKMZ18}]\label{thm:euclidwire}
Let $3 \le d\in \Ints$. There exists a constant $c$ dependent only on the dimension such that any matching $M$ between $P =[n]^{d-1}\times \{1\}$ and $Q\eqdef [n]^{d-1}\times\{cn\}$ can be embedded in $\EC^d(cn)$, that is, there is a set of vertex disjoint paths connecting $p$ and $q$ in $\EC^d(cn)$ for all $pq \in M$.
\end{theorem}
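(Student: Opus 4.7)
The plan is to represent the matching $M$ as a permutation of $[n]^{d-1}$, decompose it into $O(d)$ axis-aligned permutations via Corollary~\ref{cor:symgroups}, and realize each such permutation inside a thin slab of the grid by a two-dimensional vertex-disjoint path router.

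More precisely, identify $P$ and $Q$ with $[n]^{d-1}$ by projecting onto the first $d-1$ coordinates, so that $M$ becomes a permutation $\pi$ of $[n]^{d-1}$. Applying Corollary~\ref{cor:symgroups} with $A_1 = \cdots = A_{d-1} = [n]$ yields a factorization
\[
\pi \;=\; \sigma_{2d-3} \circ \sigma_{2d-4} \circ \cdots \circ \sigma_1,
\]
where each $\sigma_j$ lies in some $G_{i_j}$, i.e.\ it permutes only the $i_j$-th coordinate and fixes all others. I would then partition the $d$-th coordinate range of $\EC^d(cn)$ into $2d-3$ consecutive slabs $S_1, \dots, S_{2d-3}$, each of thickness $h$ (fixed later), and dedicate slab $S_j$ to realizing $\sigma_j$. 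Consecutive slabs share a boundary hyperplane, so the output of $S_j$ at horizontal position $(x_1, \dots, x_{d-1})$ is automatically the input of $S_{j+1}$ at the same position, and composing across the $2d-3$ slabs reproduces $\pi$.

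Inside slab $S_j$, because $\sigma_j$ fixes every coordinate other than $i_j$, the routing task decouples: for each fixed choice of the $d-2$ other horizontal coordinates, one gets an independent 2D routing problem in the $(i_j,d)$-plane asking for $n$ vertex-disjoint paths from $n$ sources on the bottom of an $n \times h$ rectangle to $n$ targets on the top, matched according to some permutation of $[n]$. Different 2D slices live in disjoint hyperplanes of the slab and therefore cannot interfere with each other.

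The main obstacle is this 2D permutation-routing subproblem. I would handle it by appealing to odd-even transposition sort: any permutation of $[n]$ is realized by $n$ parallel rounds of pairwise-disjoint adjacent transpositions, and each round can be implemented in a constant number of grid rows via a simple ``swap or pass-through'' gadget that independently decides, for each adjacent pair of columns, whether to crossover. This gives $h = O(n)$, so the total thickness is $(2d-3)h = O(n)$, and choosing $c$ to be any sufficiently large constant depending only on $d$ ensures the construction fits inside $\EC^d(cn)$.
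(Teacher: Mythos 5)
Your plan breaks down at the 2D permutation-routing subproblem, and the failure is topological rather than a matter of constants. In a planar grid two vertex-disjoint paths cannot cross as curves: in a planar embedding a crossing is always a shared vertex. If sources $(i,1)$ and $(j,1)$ with $i<j$ must reach targets $(\pi(i),h)$ and $(\pi(j),h)$ with $\pi(i)>\pi(j)$, the four terminals appear on the boundary of the $n\times h$ rectangle in the cyclic order $s_i,s_j,t_i,t_j$, which is a crossing pair, so any two such paths in a planar graph must share a vertex. Hence in a 2D grid the only bottom-to-top permutation realizable by vertex-disjoint paths is the order-preserving one (the identity), and your ``swap or pass-through'' gadget does not exist: a single adjacent transposition already forces a crossing. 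This is exactly why the theorem requires $d\geq 3$; the 2D analogue of the cube wiring theorem is false.

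The first half of your plan -- factor the permutation of $[n]^{d-1}$ via Corollary~\ref{cor:symgroups} into $2d-3$ single-coordinate permutations and realize them in consecutive slabs along the $d$-th axis -- is the right skeleton, and it matches the structure of the construction in~\cite{BergBKMZ18}, which the present paper cites rather than re-proves. What is missing is that each single-coordinate permutation must itself be routed inside a genuinely three-dimensional sub-box, borrowing one of the $d-2$ spare horizontal coordinates so that wires can pass around one another rather than through. For instance, to permute $n$ wires along $x_1$ inside $[n]\times[n]\times O(1)$: lift all wires one level; translate wire $i$ along $x_2$ to the distinct coordinate $x_2=i$; lift again; now translate each wire along $x_1$ to $\pi(i)$ with no collisions, since at this level distinct wires occupy distinct $x_2$-values; lift again; translate every wire back to $x_2=1$; and lift to the slab's top. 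Your decoupling into $n^{d-2}$ fully independent 2D slices discards precisely the spare coordinate that makes this crossing-free routing possible.
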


\begin{theorem}[Blown-up Cube Wiring]\label{thm:becwire}
Let $3 \le d\in \Ints$, and let $n,t$ be positive integers. We consider two opposing facets of the blown-up cube $\cC \eqdef \BEC^d(cn,t)$ (where $c\in \Ints_+$ depends only on $d$):
\begin{align*}
P &\eqdef \big([n]^{d-1} \times \{1\}\big) \times [t]\\
Q &\eqdef \big([n]^{d-1} \times \{cn\}\big) \times [t]
\end{align*}
Any matching $M$ between $P$ and $Q$ can be embedded in $\cC$, that is, there is a constant integer $c$ dependent only on the dimension $d$ such that for any matching $M$ there is a set of vertex disjoint paths connecting $p$ and $q$ in $\BEC^d(cn,t)$ for all $pq \in M$.
\end{theorem}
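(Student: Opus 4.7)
The plan is to decompose the matching into single-coordinate permutations via Corollary~\ref{cor:symgroups} and realize each inside its own slab of $\cC$, applying the ordinary Euclidean Cube Wiring theorem for the spatial permutations and the clique edges for the clique permutation.

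First, identify $M$ with a bijection $\pi$ on $[n]^{d-1}\times[t]$, whose first $d-1$ coordinates are the spatial coordinates of a facet point and whose $d$-th coordinate is the clique index. Apply Corollary~\ref{cor:symgroups} with $A_1=\dots=A_{d-1}=[n]$ and $A_d=[t]$ to obtain $\pi=\pi_1\pi_2\cdots\pi_{d-1}\pi_d\pi_{d-1}\cdots\pi_1$, where $\pi_j$ permutes only the $j$-th coordinate. Let $c_0$ be the Cube Wiring constant from Theorem~\ref{thm:euclidwire}, set $c=(2d-1)c_0$, and partition $\cC=\BEC^d(cn,t)$ along $x_d$ into $2d-1$ stacked slabs $S^{(1)},\dots,S^{(2d-1)}$, each of shape $[cn]^{d-1}\times[c_0 n]$, with $S^{(j)}$ designated to realize the $j$-th factor in the above product. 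Because every factor either permutes a spatial coordinate within $[n]$ or permutes a clique index within $[t]$, the paths at every slab interface sit at positions in $[n]^{d-1}\times[t]$, and consecutive slabs are connected via the single $d$-direction edge across the interface.

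In the middle slab $S^{(d)}$, I realize $\pi_d$ using clique edges: in each spatial column $(x_1,\dots,x_{d-1})\in[n]^{d-1}$, the $t$ incoming paths ascend each staying at its current clique index $i$ until the penultimate cell in the column, and then use the single $d$-direction inter-clique edge to switch from index $i$ in one cell to $\pi_d(i)$ in the next cell, reaching the correct top vertex. Vertex-disjointness is immediate, since the ascending segments use pairwise distinct clique indices and $\pi_d$ is a bijection on $[t]$.

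For a spatial slab $S^{(j)}$ with $j\le d-1$, the key observation is that $\pi_j$ fixes the clique index, so for each $i\in[t]$ the layer $L_i=\{(x,i):x\in[cn]^d\}$ of $\cC$---which induces a copy of $\EC^d(cn)$---hosts an independent wiring problem, and distinct layers are pairwise vertex-disjoint. The restriction of $\pi_j$ to $L_i$ is a matching between two $[n]^{d-1}$-subsets of opposite facets of the slab inside $L_i$; I apply Theorem~\ref{thm:euclidwire} to a $[c_0 n]^d$ sub-cube of $L_i$ whose opposite $(d-1)$-facets contain both subsets, yielding vertex-disjoint paths for all paths of layer $i$. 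Performing these $t$ routings in parallel across layers handles the whole slab.

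The main obstacle is that an arbitrary matching entangles spatial and clique coordinates in a way that a priori seems to require routing $t$ paths per spatial column with arbitrary clique targets; the resolution is precisely the factorization from Corollary~\ref{cor:symgroups}, which separates each spatial axis permutation (realized in vertex-disjoint layers via ordinary cube wiring) from a single clique permutation (realized by clique edges in one slab). Choosing $c=(2d-1)c_0$ guarantees that a cube of side $c_0 n$ fits inside every slab, giving the required constant depending only on $d$.
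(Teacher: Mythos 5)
Your proof is correct, but it takes a genuinely different (and slightly heavier) route than the paper's.

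You apply Corollary~\ref{cor:symgroups} with $A_1=\cdots=A_{d-1}=[n]$ and $A_d=[t]$, obtaining $2d-1$ single-coordinate factors, and realize each of the $2(d-1)$ single-axis spatial factors by a separate application of Theorem~\ref{thm:euclidwire} inside its own slab. The paper instead applies only Lemma~\ref{lem:symgroup} to the two-set product $A\times B$ with $A=[n]^{d-1}$ and $B=[t]$, yielding $\pi = \pi'_B\pi_A\pi_B$ with a \emph{single} spatial factor $\pi_A$ (an arbitrary permutation of $[n]^{d-1}$ preserving the clique index) sandwiched between two clique permutations. The observation that a clique-index-preserving permutation restricts to an independent cube-wiring problem in each of the $t$ vertex-disjoint layers already handles $\pi_A$ in one shot, so there is no need to further decompose the spatial part axis by axis; and $\pi_B,\pi'_B$ are each realized by a single adjacent-layer step rather than a full slab. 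As a result the paper's constant is $c = c'+2$ rather than your $c=(2d-1)c_0$, and Corollary~\ref{cor:symgroups} is reserved for the later geometric construction (where the axis-by-axis decomposition genuinely is needed to build parallel-matching gadgets out of boxes). Both proofs are valid; yours is just less economical and uses more machinery than necessary, since the crucial insight you both share -- that clique-preserving spatial permutations split into $t$ disjoint copies of ordinary cube wiring -- already absorbs the whole of $\pi_A$ at once.
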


\begin{proof}
Without loss of generality, suppose that $M$ is a perfect matching between $P$
and $Q$ (this can be ensured by adding dummy edges to $M$ if necessary). Let
$c=c'+2$ where $c'$ is a constant such that cube wiring can be done in height $h=c'n$. Let
$A=[n]^{d-1}$ and let $B=[t]$. The matching $M$ can be regarded as a
permutation $\pi$ of $A\times B$, where $\pi(a,b)=(a',b')$ if
$\big((a,b)(a',b')\big) \in M$.

By Lemma~\ref{lem:symgroup}, there exists a permutation $\pi_A \in G_A$ and $\pi_B,\pi'_B \in G_B$ such that $\pi=\pi'_B\pi_A\pi_B$, where $G_A$ and $G_B$ are defined as in Lemma~\ref{lem:symgroup}. We can think of both $\pi_B$ and $\pi'_B$ as the union of
$n^{d-1}$ distinct permutations of $[t]$. We can realize $\pi_B$ using one
matching: for all $(x,i) \in A\times B$, we add the edge $((x,1),i)((x,2),j)$ to $M_B$,
where $\pi_B(x,i)=(x,j)$. As a result, $M_B$ is a perfect matching between $P$
and the next layer of the blown-up cube, $P'\eqdef \big([n]^{d-1} \times
\{2\}\big) \times [t]$. Similarly, for all $(x,i) \in A\times B$, let
$M'_B$ contain the edge $((x,cn-1),i)((x,cn),j)$, where $\pi'_B(x,i)=(x,j)$; this matches $Q'
\eqdef \big([n]^{d-1} \times \{cn-1\}\big) \times [t]$ to $Q$. Finally, by
the Cube Wiring Theorem (Theorem~\ref{thm:euclidwire}), for
each $i\in [t]$, there are vertex disjoint
paths from $P'_i \eqdef \big([n]^{d-1} \times \{2\}\big) \times \{i\}$ to
$Q'_i \eqdef \big([n]^{d-1} \times \{cn-1\}\big) \times \{i\} $ that realizes
the matching
\[M^i_A \eqdef \{ (x,i)(y,i) \mid x\in [n]^{d-1}\text{ and } \pi_A(x,i)=(y,i)\}.\]
For each $i\in [t]$, these wirings are vertex disjoint since they are contained
in vertex disjoint Euclidean grid hypercubes. The
matchings $M^i_A$ for $i \in [t]$ together with the matchings
$M_B$ and $M'_B$ realize the matching $M$.
\end{proof}

\section{Lower bounds for packing isometric axis-parallel boxes}\label{sec:lower}

Our first lower bound shows that the running time of the algorithm in
Theorem~\ref{thm:weightedalg} is tight under ETH.

\emph{Overview of the proof of Theorem~\ref{thm:generallower}.}
Our proof is a reduction from $(3,3)$-\textsc{SAT}, the satisfiability problem
of CNF formulas where clauses have size at most three and each variable occurs
at most three times. Such formulas have the property that if they have $n$
variables, then they have $O(n)$ clauses. The problem has no $2^{o(n)}$
algorithm under ETH~\cite{frameworkpaper}.

The proof has two
steps; the first step is a reduction form $(3,3)$-\textsc{SAT} to \IS in
certain subgraphs of the blown-up Euclidean cube, and the second step is to
show that these subgraphs can essentially be realized with axis-parallel boxes.
Throughout the proof, we consider the dimension $d$ to be a constant.

The \emph{incidence graph} of a $(3,3)$-CNF formula $\phi$ is a graph where vertices
correspond to clauses and variables of $\phi$, and a variable and clause
vertex are connected if and only if the variable occurs in the clause.

\subsection{\IS in subgraphs of the blown-up Euclidean cube}
\label{subsec:loweroverview}

\paragraph*{A simple and generic lower bound construction for \IS.}

We give a generic reduction from \textsc{$(3,3)$-SAT} to \IS, which serves as
a skeleton for the more geometric type of reduction we will do later.

Consider the incidence graph of $\phi$. Replace each variable vertex $v$ with
a cycle of length $6$, consisting of vertices $v^1,\dots,v^{6}$, where the
edges formerly incident to $v$ are now connected to distinct cycle vertices $v^2,v^4$ or $v^6$ for positive literals and to $v^1,v^3$ or $v^5$ for negative literals (see
Figure~\ref{fig:vcgadget}). We replace each clause vertex $w$ that corresponds to a
clause of exactly $3$ literals with a cycle of length three,
and connect the formerly incident edges to distinct vertices of the triangle.
For clauses that have exactly two literals, the gadget is a single edge, and
we connect the formerly incident edges to distinct endpoints of the
edge. We can eliminate clauses of size $1$ in a preprocessing step. Let
$G'_\phi$ be the resulting graph.

\begin{figure}
\begin{center}
\includegraphics[scale=1]{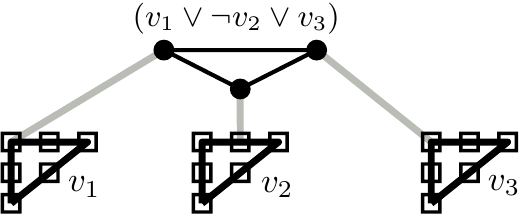}
\caption{The graph $G_\phi$ for
$\phi=(v_1 \vee \neg v_2 \vee v_3)$}\label{fig:vcgadget}
\end{center}
\end{figure}

An independent set can contain at most $3$ vertices of a variable cycle of
length $6$, and at most $1$ vertex per clause gadget. Observe that a formula with $\nu$ variables and $\gamma$
clauses has an independent set of size $3\nu + \gamma$ if and only if
the original formula is satisfiable.

Let $G$ be a graph and let $uv$ be an edge of $G$. A \emph{double subdivision}
of $uv$ is replacing $uv$ with a path of length $3$, i.e., we add the new
vertices $w$ and $w'$, remove the edge $uv$ and add the edges
$uw, ww', w'v$. A graph that can be obtained from $G$ by
some sequence of double subdivisions is called an \emph{even subdivision} of
$G$. Observe that a double subdivision increases the size of the maximum
independent set by one, so $G$ has an independent set of size $k$ if and only
if its even subdivision $G'$ has an independent set of size
$k+\frac{|V(G')|-|V(G)|}{2}$.

\vspace*{-0.5em}
\paragraph*{Embedding $G'_\phi$ into a blown-up cube.}

In a blown-up cube $\BEC^{d}(n,t)$, we call a clique corresponding to $x\in
[n]^d$ the \emph{cell of $x$} or simply a cell, that is, the cell of $x$ is
defined as the set of vertices $\{x\} \times [t] \subset V(\BEC^{d}(n,t))$.

The following is a tight lower bound for \IS inside the blown-up
Euclidean cube.

\begin{theorem}\label{thm:indepinbec}
For any fixed constant $d\geq 3$, there exists a $\gamma>0$ such that for any
$t \geq 2$ there is no $2^{\gamma n^{1-1/d}t^{1/d}}$ algorithm
for \IS for subgraphs of the blown-up cube $\cC \eqdef \BEC^{d}((n/t)^{1/d},t)$
under ETH. The
lower bound holds even if the subgraph $G$ has maximum degree three, and the
neighbors of each vertex in $G$ lie in distinct cells.
\end{theorem}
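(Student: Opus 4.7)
The plan is a reduction from $(3,3)$-\textsc{SAT}, for which ETH forbids any $2^{o(\nu)}$ algorithm on instances with $\nu$ variables. Given a formula $\phi$ with $\nu$ variables, I would first apply the gadget construction of Section~\ref{subsec:loweroverview} to produce $G'_\phi$ with $|V(G'_\phi)| = \Theta(\nu)$ and maximum degree $3$, whose maximum independent set size encodes the satisfiability of $\phi$. Given $t \geq 2$ and $\nu$ sufficiently large (say $\nu \geq C t^d$ for a constant $C$ depending on $d$), I would set $m = \Theta((\nu/t)^{1/(d-1)})$ so that the target blown-up cube $\cC \eqdef \BEC^d(m,t)$ has $n = m^d t$ vertices with $n^{1-1/d} t^{1/d} = m^{d-1} t = \Theta(\nu)$ and simultaneously $m^d = \Omega(\nu)$. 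A $2^{\gamma n^{1-1/d} t^{1/d}}$ algorithm for \IS on any subgraph of $\cC$ would then solve $(3,3)$-\textsc{SAT} in $2^{O(\gamma \nu)}$ time, contradicting ETH once $\gamma>0$ is small enough.

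The technical heart is embedding some even subdivision of $G'_\phi$ as a subgraph $G \subseteq \cC$ of maximum degree $3$ whose every vertex has all its neighbors in pairwise distinct cells. I would place the $\Theta(\nu)$ vertices of $G'_\phi$ on $\Theta(\nu)$ selected cells of $\cC$, using at most one cube vertex per used cell (possible since $m^d = \Omega(\nu)$), so that within-cell cliques introduce no spurious adjacencies. Since $G'_\phi$ has maximum degree $3$, Vizing's theorem partitions $E(G'_\phi)$ into at most four matchings $M_1,\dots,M_4$. For each $M_j$, I would route its edges as vertex-disjoint ``U-shaped'' paths inside a dedicated $O(m)$-thick slab of $\cC$: each path first ascends vertically from its source $u$ on the slab's bottom, then crosses the slab by applying the Blown-up Cube Wiring Theorem (Theorem~\ref{thm:becwire}) to a between-facet matching that encodes $M_j$, and finally descends vertically to the target $v$. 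A short within-cell detour where necessary pushes every path length to an odd value, making $G$ a bona fide even subdivision of $G'_\phi$, so that $\mathrm{MIS}(G) = \mathrm{MIS}(G'_\phi) + \tfrac{1}{2}\bigl(|V(G)| - |V(G'_\phi)|\bigr)$.

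The principal obstacle is that Theorem~\ref{thm:becwire} delivers wirings between opposite facets of a blown-up cube, whereas our matchings $M_j$ live within a single facet $F_0 \eqdef \{(x,i) : x_d = 1\}$; the U-shape construction above converts each within-facet matching into a between-facet one at the top of its slab, with the ascending and descending vertical columns acting as the ``turn'' segments, and the four slabs together fit into $O(m)$ of the $\Theta(m)$ depth of $\cC$. Two further details need care: first, the distinct-cells condition holds provided no three consecutive path vertices share a cell, which is automatic whenever the wiring steps are mostly between-cell; second, at each $G'_\phi$-vertex the at most three incident paths should depart in distinct lattice directions so that their first successors occupy three different cells, which I would arrange by assigning distinct outgoing columns to the four color classes. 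With these in place, the construction is polynomial-time, and the even-subdivision bookkeeping cleanly transports satisfiability of $\phi$ to the \IS threshold for $G$.
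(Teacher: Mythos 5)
Your overall plan — reduce from $(3,3)$-\textsc{SAT}, pass to $G'_\phi$, realize an even subdivision as a bounded-degree subgraph of the blown-up cube via the wiring theorem, and do the running-time bookkeeping — is the same as the paper's, and the bookkeeping at the end is essentially right. However, the placement and routing steps contain a genuine gap.

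The central problem is a parameter mismatch. You want to place the $\Theta(\nu)$ vertices of $G'_\phi$ using \emph{at most one cube vertex per cell}, and you also want those vertices to live on the facet $F_0 = \{(x,i): x_d = 1\}$ so that the within-facet matchings $M_j$ can be routed. But with your choice $m = \Theta((\nu/t)^{1/(d-1)})$, the facet $F_0$ contains only $m^{d-1} = \Theta(\nu/t)$ cells. For any $t \geq 2$ this is strictly smaller than $\nu$, so there is no injective one-per-cell placement of $G'_\phi$ into the cells of $F_0$. You must use the $t$-fold blowup and put many $G'_\phi$-vertices into the same cell — which is exactly what the paper does (variable-cycle vertices are placed at indices $(x^{(i)}, 2k)$ across a range of $k\in[t/2]$) — but then your justification ``so that within-cell cliques introduce no spurious adjacencies'' evaporates, and you have to argue the ``neighbors in distinct cells'' property directly rather than get it for free. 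The paper does this by arranging the six cycle vertices in six \emph{neighboring} cells and handling the within-cell parity-fix edge explicitly; your proposal does not address this.

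A second, independent concern is the Vizing-plus-slabs routing. If the four slabs are stacked along the $x_d$-axis, then the ascending legs of the U-shaped paths for color class $M_4$ must pass through the slabs dedicated to $M_1,M_2,M_3$. Theorem~\ref{thm:becwire} guarantees vertex-disjoint paths inside each slab but gives you no control over \emph{which} vertices they occupy, and since the wiring is dense (a constant fraction of each layer is used), there is no guaranteed room for the later color classes' vertical columns to pass through. The paper sidesteps this entirely by exploiting the bipartite structure of $G'_\phi$: variable gadgets (the six-cycles) go on the bottom facet $P$, clause gadgets (edges/triangles) go on the top facet $Q$, and a \emph{single} application of Theorem~\ref{thm:becwire} realizes the variable-to-clause incidence matching. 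This both avoids Vizing's theorem and avoids any slab-interference issue. If you keep the slab idea, you would at minimum need to reserve spare columns or increase the blowup so that each slab's vertical feed-throughs are provably realizable, none of which you have argued.

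In short, the skeleton is right but the embedding step (one-per-cell on $F_0$) is impossible for $t\geq 2$ as stated, and the multi-slab wiring needs a disjointness argument it does not currently have; switching to the paper's variable-at-bottom / clause-at-top bipartite placement and a single wiring resolves both issues.
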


\begin{proof}
Given a $(3,3)$-SAT formula $\phi$, we show that we can construct a subgraph
of a blown-up cube with the required properties that is also an even
subdivision of $G'_\phi$. If $\phi$ has $\bar{n}$ literals, then we create a
subgraph $G$ that has $n=c \cdot \bar{n}^{\frac{d}{d-1}}/t^{\frac{1}{d-1}}$ vertices for some constant $c$ which will be specified later.
Let $s \eqdef (n/t)^{1/d}$ denote the side length of $\cC$. Note that $|V(\cC)|=s^d t =n$.

First, we embed an even subdivision of $G'_{\phi}$ into $\cC$ as explained next. We use the bottom and the top ``layers'' of the blown-up
cube to embed the variable cycles and clause cycles respectively. Let $P$ and $Q$ be
the point sets corresponding to the bottom and top layer of cells respectively, i.e.,
$
P \eqdef \big([s]^{d-1} \times \{1\}\big) \times [t],
\qquad
Q \eqdef \big([s]^{d-1} \times \{s\}\big) \times [t].
$
We embed each variable cycle of $G'_\phi$ into $P$: we (injectively) associate six vertices of even
intra-cell index in six cells of $P$, that is, a variable cycle on vertices
$v_1\dots v_6$ is associated with the vertices
$(x^{(1)},2k),\dots,(x^{(6)},2k)$ in this order, where
$x^{(1)},\dots x^{(6)}$ is a cycle in  the bottom facet of $\EC^d(s)$ and $k
\in [t/2]$. As we will see below, if $|P|$ is large enough, then we can pick $x^{(1)},\dots x^{(6)}$ and $k$ for each variable cycle so that this association is injective.

With each clause, we associate a pair or triplet of vertices in $Q$
that are in neighboring cells, more precisely, for clauses of size three, the
vertices are of the form $(x,k),(x,k+1),(x',k)$, while for clauses of size two
we have $(x,k),(x',k)$ for some $(x,x')\in E(\EC^d(s))$ and $k\in [t-1]$. If $|Q|$ is large enough, then we can pick $x$ and $k$ for each clause so that the association remains injective.

Let $\Var$ be the set of vertices in $\cC$ corresponding to vertices on the variable cycles with a wire connection. Let $\Cla$ be the set of
vertices in $\cC$ assigned to the clauses. Note that the number of literals is
$|\Cla|=|\Var|=\bar{n}$, and we have that
\[
\begin{split}
|P|=|Q|= s^{d-1} t = 
(n/t)^{\frac{d-1}{d}}t=\left(\frac{c\cdot \bar{n}^{\frac{d}{d-1}}/t^{\frac{1}{d-1}}}{t}\right)^{\frac{d-1}{d}} \cdot t
=c^{\frac{d-1}{d}}\bar{n}.
\end{split}
\]
By picking $c\geq 6$, we ensure that there is
enough space to do the above associations injectively for any $d\geq 3$, as we will have $|P|=|Q|>3\bar{n}$.

Let $M$ be the perfect matching between $\Var$ and $\Cla$ given by $\phi$.
By Theorem~\ref{thm:becwire}, there is a wiring from
$\Var$ to $\Cla$ realizing $M$, as long as $c$ is a large enough constant.
Crucially, observe that $|P|=|Q|=\Theta(s^{d-1}t)$
means that $P$ and $Q$ occupy a constant fraction of the vertices in the cells
of the top and bottom facet of $\cC$, so the wiring given by
Theorem~\ref{thm:becwire} is dense in the sense that a constant fraction of
all vertices of $\cC$ is induced by the wiring.

Next, in $Q$, we add an edge or triangle for each pair or triplet of vertices
assigned to a clause. In $P$, for each vertex $(x,2k) \in \Var$ that is the
endpoint of a wire of even length, we add an edge $((x,2k),(x,2k-1))$, and
regard $(x,2k-1))$ as the new endpoint of this wire. Finally, for each
six-tuple of wire endpoints corresponding to a variable, we add a $6$-cycle.

The graph $G'$ created this way clearly has the desired properties: it is a
subgraph of $\cC$ that has maximum degree three, and the
neighbors of each vertex in $G'$ lie in distinct cells. Moreover, $G'$ can be
constructed in $O(n)=\poly(\bar{n})$ time.

By the properties of even subdivisions of $G'_\phi$, we know that $G'$ has an
independent set of a certain size if and only if $\phi$ is satisfiable.
Suppose that for all $\gamma>0$ there is an
$\exp\left(\gamma n^{1-1/d}t^{1/d}\right)$ algorithm for \IS.
This would result for all $\gamma>0$ in a $(3,3)$-\textsc{SAT} algorithm with running time
\[
\exp\left(\gamma c \cdot \bar{n}^{\frac{d}{d-1}}/t^{\frac{1}{d-1}} 
 \right)^{1-1/d} \cdot t^{1/d} + \poly(\bar{n}) = 2^{(\gamma c)^{1-1/d} \cdot \bar{n}} + \poly(\bar{n}).
\]
The existence of such algorithms contradicts ETH.
\end{proof}


\subsection{Detailed construction and gadgetry}

Having established our lower bound for blown-up Euclidean cubes, we now need
to construct a set of canonical boxes whose intersection graph is an even
subdivision of a given subgraph with maximum degree three where the neighbors
of each vertex lie in distinct cells.

\begin{theorem}\label{thm:construct}
Let $d\geq 3$ and $L\geq 16$ be fixed, and let $G$ be a subgraph of the blown-up
cube $\cC = \BEC^{d}(s,(L/8)^{d-1})$ of maximum degree three, where the
neighbors of each vertex lie in distinct cells. Then $G$ has an even
subdivision $G'$ that can be realized using boxes of size $1\times \dots \times 1 \times
L$. Moreover, given $G$, the boxes of $G'$ can be constructed in $O(|V(\cC)|)$
time, and $|V(G')|=O(|V(G)|)$.
\end{theorem}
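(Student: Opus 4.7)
The plan is to place each cell of $\cC$ in a separate axis-aligned cubical region of side slightly larger than $L$ in $\Reals^d$, tiling $\Reals^d$ so that adjacency in $\cC$ becomes geometric adjacency of regions. Each vertex of $G'$ is realized as a $1\times\dots\times 1\times L$ box oriented along one of the $d$ coordinate axes, and each edge of $G'$ is an orthogonal crossing of two such boxes in a single unit cell (a ``bend''). The cell region side being only $O(L)$ guarantees that no box reaches past an immediately adjacent cell, so non-adjacent cells produce no intersections.

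The core local ingredient is a \emph{slab}: a collection of $(L/8)^{d-1}$ parallel boxes oriented along one axis $x_i$, arranged in an $(L/8)\times\dots\times(L/8)$ transverse grid. Boxes inside a slab are pairwise disjoint. When two slabs oriented along distinct axes $x_i$, $x_j$ are stacked consecutively, their bends realize a permutation that acts on only one coordinate of the transverse $(d-1)$-dimensional grid $[L/8]^{d-1}$. Invoking Corollary~\ref{cor:symgroups} on this grid, a chain of at most $2d-3$ slabs realizes any permutation, hence any matching between the $(L/8)^{d-1}$ ``ports'' on two opposite faces of a cell region. This is the geometric counterpart of Theorem~\ref{thm:becwire} at the cell scale.

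Each vertex $v$ of $G$---which has at most three neighbors, all in pairwise distinct cells---is realized by a constant-size \emph{hub} placed inside $v$'s cell region. The hub emits up to three wires; each wire enters the cell's slab stack, is routed by the stack's permutation to the correct port on the appropriate face of the cell, crosses into the neighboring cell's slab stack, and terminates at the destination hub. Because the maximum degree is $3$ and each neighbor lies in a distinct cell, only $O(1)$ boxes are produced per vertex of $G$ (hub plus its $O(d)$ slab contributions), giving $|V(G')|=O(|V(G)|)$. All boxes are generated by iterating through cells and their port matchings in time $O(|V(\cC)|)$.

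The main obstacle is parity control: $G'$ must be an \emph{even} subdivision of $G$, so each wire must be a path of odd length. The length of a wire is determined by how many slabs it traverses and is therefore a fixed function of the local routing; to normalize, we prepend a short fixed-length parity corrector (at most two additional boxes) to every wire that would otherwise have even length. Beyond parity, the remaining work is a routine intersection audit: boxes in a single slab are parallel and disjoint; boxes in non-consecutive slabs within a cell are separated by assigning each slab a disjoint ``slot'' along the stacking direction; and boxes never reach into non-adjacent cells by the cell region size. Verifying these three local invariants shows that the resulting intersection graph is exactly $G'$, as required.
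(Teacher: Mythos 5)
Your plan shares the skeleton of the paper's construction: tile space into cell regions of side $O(L)$, arrange $(L/8)^{d-1}$ parallel unit$\times L$ boxes into grid layers (your slabs, the paper's ``bricks''), decompose an arbitrary permutation of $[L/8]^{d-1}$ into $O(d)$ single-axis permutations via Corollary~\ref{cor:symgroups}, and patch parity with short fixed-length correctors. So the overall route is the same, not a different one. But several load-bearing steps are asserted rather than established, and two of them point to genuine gaps.

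First, the claim that ``when two slabs oriented along distinct axes $x_i,x_j$ are stacked consecutively, their bends realize a permutation that acts on only one coordinate'' is not justified and, as stated, is false: with two unperturbed slabs the bends give a single fixed bijection between ports, not a free choice of permutation. The technical heart of the paper's parallel-matching gadget is a sequence of \emph{four} bricks whose boxes carry two scales of carefully chosen shifts --- coarse $\Theta(1)$ shifts along the brick axis and fine $\Theta(1/L)$ shifts in the transverse directions --- arranged so that the third and fourth bricks can be rearranged according to $\pi_j$ without creating spurious intersections. Without a concrete mechanism like this, there is no encoding of an arbitrary single-coordinate permutation into geometry, and the appeal to Corollary~\ref{cor:symgroups} has nothing to compose.

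Second, the instances coming out of Theorem~\ref{thm:indepinbec} do contain edges \emph{inside} a cell (e.g. the edge $(x,k)(x,k+1)$ in a clause triangle, or the edge $((x,2k),(x,2k-1))$ added for parity in $P$); ``the neighbors of each vertex lie in distinct cells'' only says the up-to-three neighbors are in pairwise distinct cells, not that they avoid the vertex's own cell. Your hub-and-wire description only routes to ports on faces and hence to neighboring cells, so intra-cell edges are not realized. The paper solves this by a separate ``core'' general-matching gadget inside each module. Relatedly, a single stack of slabs only connects two opposite faces of a cell region, but a vertex may need wires to three different neighboring cells (up to $2d$ faces are possible); routing from the hub to the correct faces requires explicit branching and turning gadgets and a tree of bricks, which your sketch does not supply. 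These are the places where the construction would break as written.
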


We proceed with the proof of Theorem~\ref{thm:construct}.
We consider $d=3$ first; later on, we show how the construction can be generalized to higher
dimensions. We need to define a set of boxes whose intersection graph is an
even subdivision of $G$. The idea is to create a generic \emph{module} that is
able to represent a subgraph of $G$ induced by any cell; these modules will
take up $O(L)\times O(L)\times O(L)$ space. The modules are arranged into a
larger cube of side length $O(sL)$ to make up the final construction.

\begin{figure}[t]
\begin{center}
\includegraphics[height=3cm]{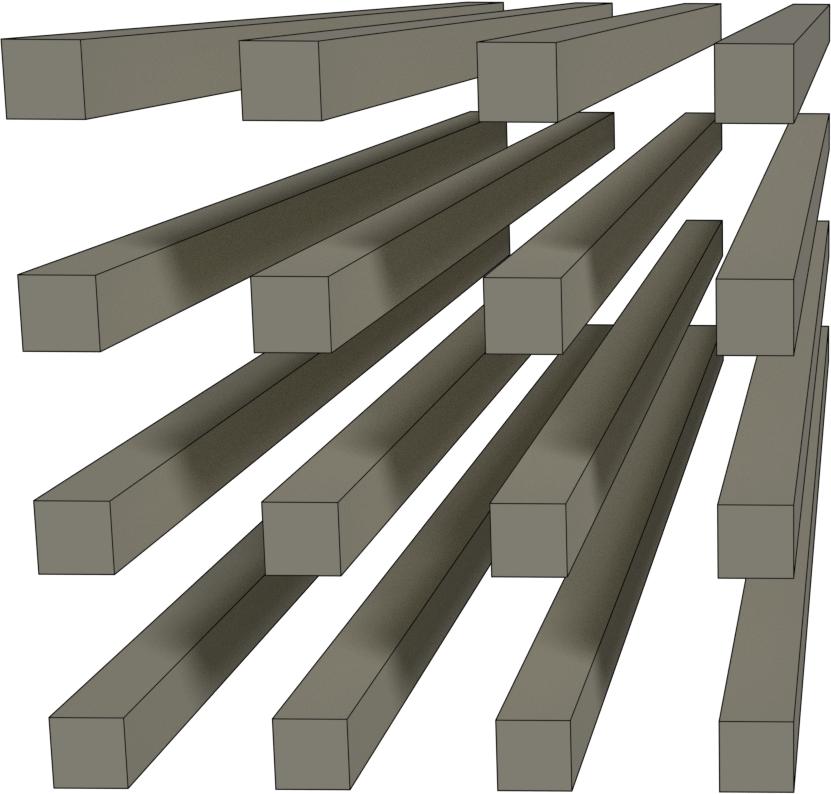}
\caption{A basic brick.}\label{fig:brick}
\end{center}
\end{figure}

\paragraph*{Modules and bricks.}
We index the vertices in a cell by a pair from $[L/8]^2$. 
The starting object in our reduction is a set of $(L/8)^2$ disjoint boxes
parallel to the same axis, arranged loosely in an $L/8 \times L/8$ grid
structure called a \emph{brick}. See Figure~\ref{fig:brick} for an example.
Loosely speaking, each box of each brick within the cell's module can be associated with a vertex of the cell; for a brick $B$, we can
refer to a box corresponding to vertex $(i,j)$ of the cell as $B(i,j)$.

Let $X$ be the set of cells within $\cC$: $X\eqdef \big\{\{x\}\times [L/8]^2
\;\big|\;x \in [cn]^d\big\}$. The wiring within each cell $x\in X$
will be represented by $O(1)$ bricks, and these bricks will fit in an $O(L)$
side length module.

The position of a brick can be specified by defining its axis (along which the
side length of the boxes is  $L$), and for each box $(i,j)$ within the brick,
defining the coordinates of its lexicographically smallest corner (or
\emph{lexmin corner} for short). For example, consider the brick $B$ with axis
$x_3$ where box $B(i,j)$ has coordinates $(3i,3j,0)$. (See Figure~\ref{fig:brick}.) This brick and all
bricks isometric to this are called \emph{basic brick}s. Most bricks can be
thought of as a perturbation of a basic brick, where we apply shifts to each box. 
The eventual module that we create will consist of several bricks, which
together will represent an even subdivision of the sparse graph $G$ restricted to a given
cell. Note that no single brick can be said to represent the set of vertices in a cell.
When defining our gadgetry, it is convenient to talk about these bricks, even
though in the final construction we only need a
certain subset of the boxes within each brick. We can remove the unwanted
boxes from each brick at a later stage.

\paragraph*{Parity Fix, adjustment, bridge, and elbow gadgets.}

The parity fix gadget is introduced so that we can ensure that each of the
subdivisions that we create are even subdivisions. The gadget induces a path of length $3$
or $4$ depending on our needs, but occupies the same space in both cases. More precisely,
the parity fix gadget contains three or four boxes, depending on the parity we
need. The union of the boxes is a larger box of size $3L \times 1 \times 1$;
it is easy to see that within that space we can realize both a path of length
three and four using $L\times 1 \times 1$ boxes: one can cover the larger box
by placing their lexmin corners at equal length intervals.

We can bridge distance along the axis of a basic brick by putting basic bricks
next to each other, where each box intersects only the box of the same index from the
previous and following brick. This creates a set of $(L/8)^2$ vertex disjoint
paths in the intersection graph. We call this a \emph{bridge gadget}.

Using two bricks of the same axis, we can in one step
get rid of a perturbation (or introduce one). Let $B$ be a normal brick with
axis $x_3$ that is a perturbation of the basic brick. We introduce the
basic brick $B'$ that is the translate of the basic brick with the vector
$(0,1,L/2)$. Notice that box $B(i,j)$ intersects $B'(i,j)$ and no other boxes.
Moreover, we could even introduce arbitrary perturbations along the $x_1$
axis in $B'$ and along the $x_3$ axis within both $B$ and $B'$ without
changing the intersection graph induced by $B$ and $B'$. We call a pair of
normal bricks that are a translated and rotated version of these an
\emph{adjustment gadget}.

\begin{figure}[t]
\begin{center}
\includegraphics[height=5cm]{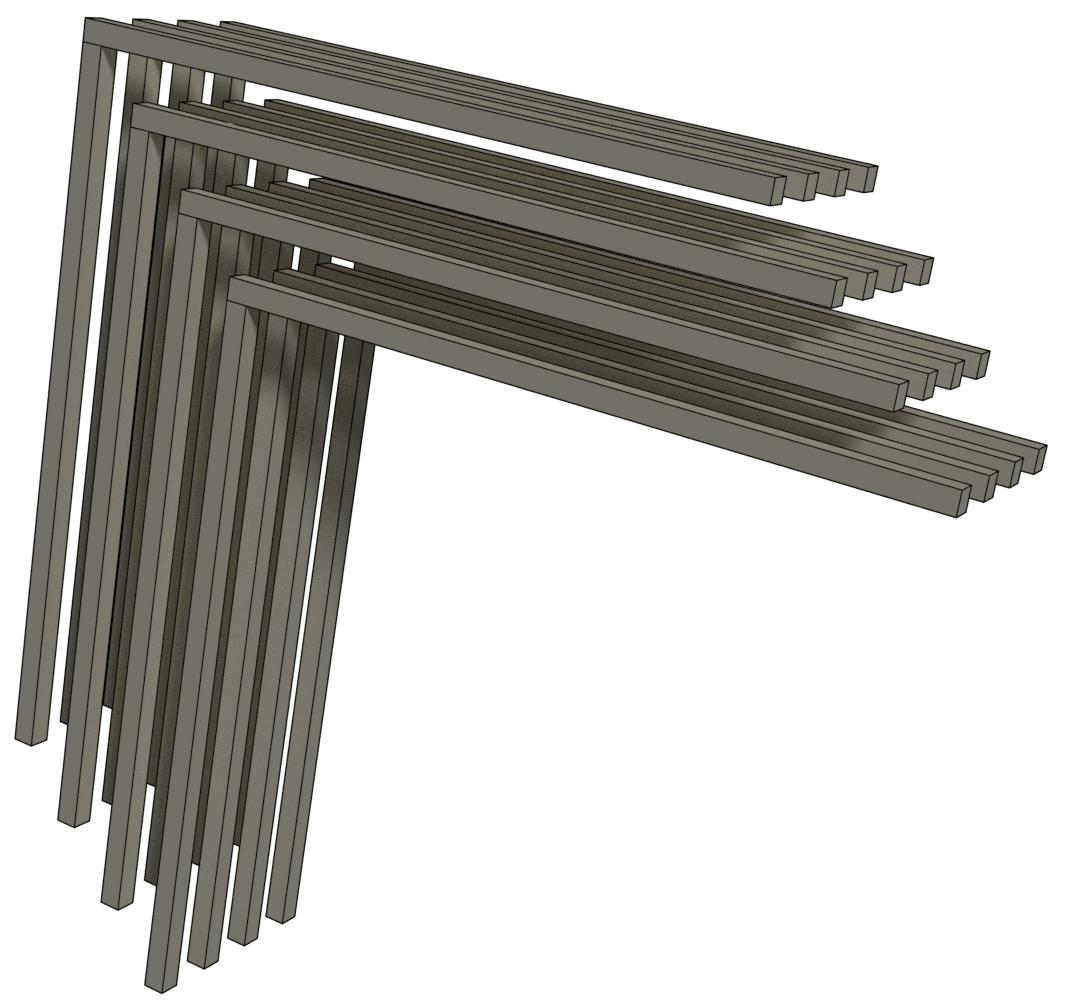}
\caption{An elbow.}\label{fig:brickelbow}
\end{center}
\end{figure}

\begin{figure}[t]
\begin{center}
\includegraphics[height=7cm]{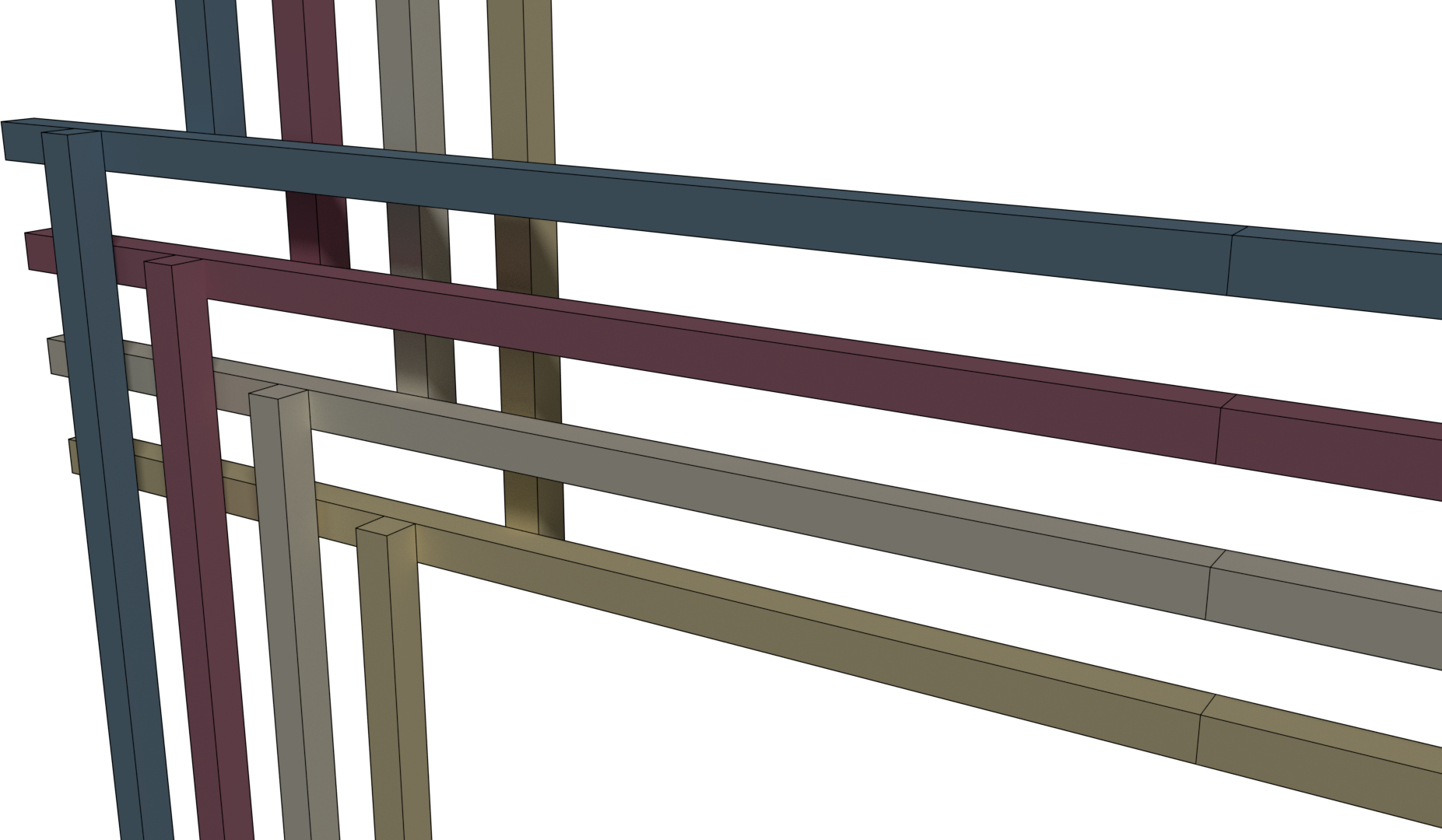}
\caption{The first ``column'' of a branching gadget.}\label{fig:branching}
\end{center}
\end{figure}

Next, we introduce a way to change brick axis using an ``elbow''. Consider a
brick $B$ that is a perturbation of the basic brick, where box $(i,j)$ has
coordinates $(3i,3j,-3i)$. The brick $B'$ has axis $x_1$ and the coordinates
for $B'(i,j)$ are $(3i,3j,L-3i)$ (see Figure~\ref{fig:brickelbow}). Notice
that using these \emph{elbow gadgets} and adjustment gadgets together, one can
route from any brick to any other brick at distance $\Theta(L)$ in $O(1)$
steps.

\paragraph*{The parallel matching gadget.}

A parallel matching
gadget is capable of realizing a matching between two cells where the endpoints of each matching edge
differ only on a fixed coordinate, so for $d=3$, all edges are of the type
$\big((x,(i,j)),(x',(i',j)) \big)$ or all edges are of the type $\big((x,(i,j)),(x',(i,j')) \big)$ for some cells $x$ and $x'$.
We call a matching with this property a \emph{parallel matching}.
Parallel matchings can be decomposed into matchings on disjoint cliques, where
each clique contains vertices that share all of their coordinates except one. In the remainder of this gadget's description, we will omit the cells $x$ and $x'$ from the coordinate lists.

Suppose that each matching edge is of the form $\big((i,j),(i',j)\big)$.
Let $\pi_j(i)$ denote the first coordinate of the pair of
$(i,j)$, that is, suppose that the matching edges are
$\big((i,j),(\pi_j(i),j)\big), \; i \in I_j$ for some sets $I_j \subseteq
[L/8]$. Instead of realizing these matchings, we first extend them to
permutations $\pi_j$ on each clique $[L/8]\times \{j\}$. A permutation can be
thought of as a perfect matching between two copies of a set; by removing the
unwanted vertices (removing the unwanted boxes) we can get to a representation
of the matching, i.e., a set of vertex disjoint paths that connect box $(i,j)$
in the starting brick to box $(\pi_j(i),j)$ in the target brick.


In every brick, each box is translated individually, where the translation
vector's component along the brick's axis  must be an integer $k \in
3\cdot\{-L/8,\dots,L/8\}$, and along the other axes it must be of the form
$k/L$ for some $k \in \{-L/8,\dots,L/8\}$. For a brick $B$, its box of index
$(i,j)$ is denoted by $B(i,j)$, and recall that the position of a box is
defined by its lexmin corner and the axis of the brick.

\begin{figure}[t]
\begin{center}
\includegraphics[height=5.4cm]{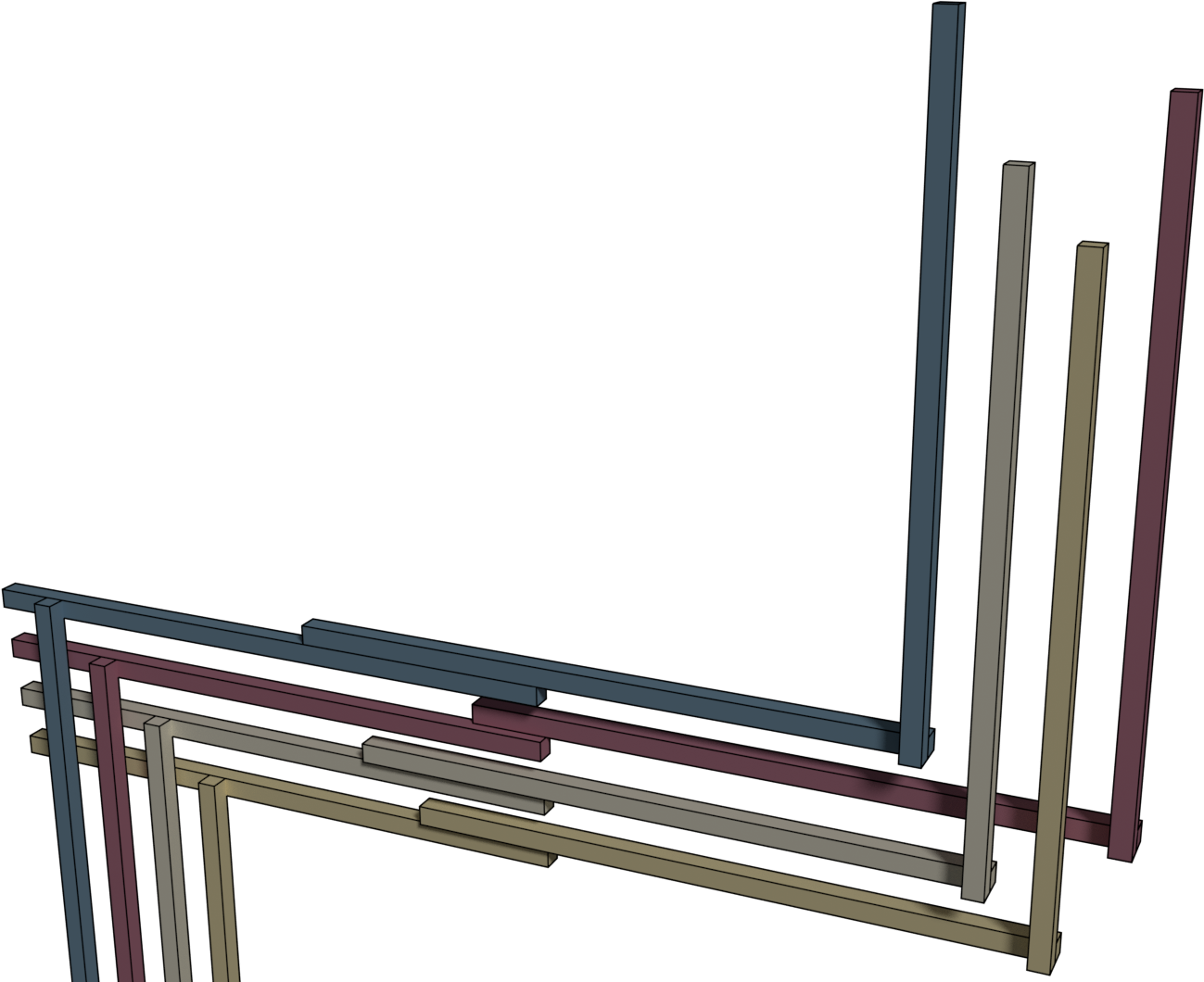}\hfill
\includegraphics[height=5.4cm]{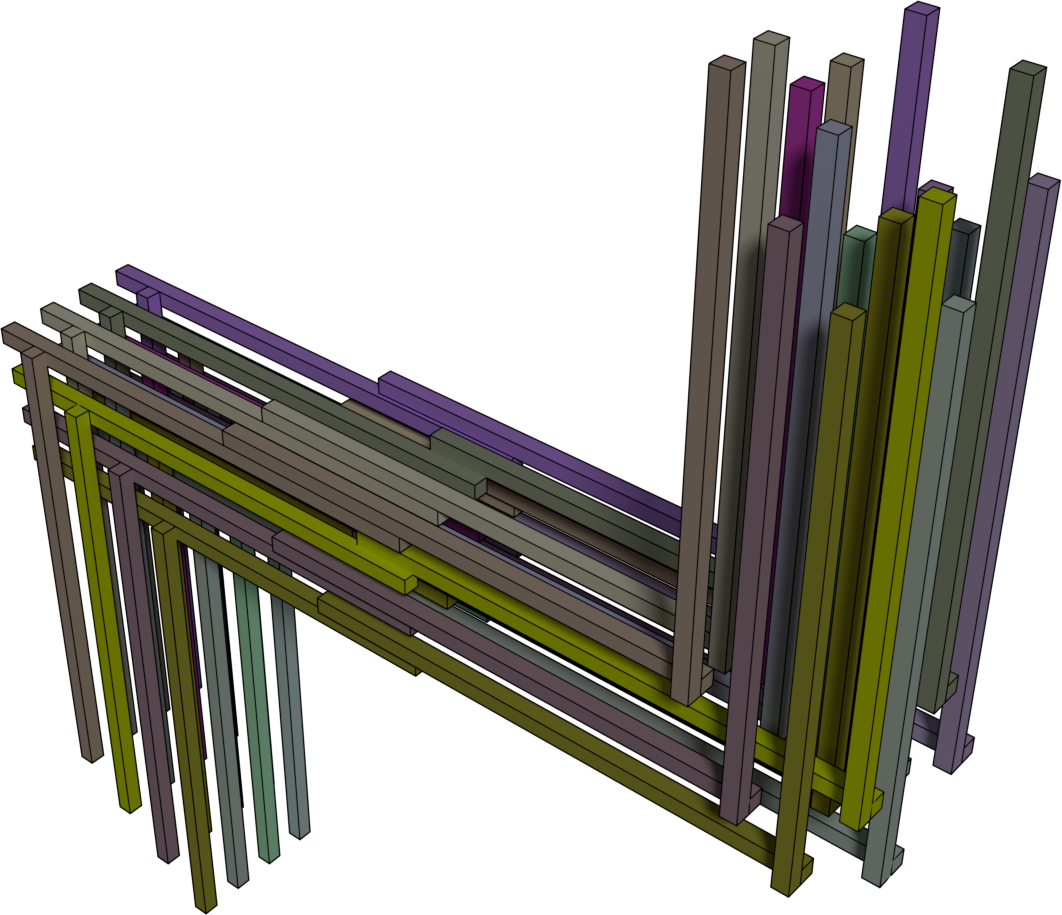}
\caption{Left: First column of a parallel matching gadget for the permutation
$\pi_1(1)=1,\pi_1(2)=4,\pi_1(3)=2,\pi_1(4)=3$. Boxes of each color induce paths; boxes of different color are disjoint. Right: A full parallel matching gadget.}\label{fig:matchingrender}
\end{center}
\end{figure}

We give the coordinates of each box in each brick of the parallel matching gadget below. Let us take the matching edges where $j=1$ first. We start with the first
column of the brick ($j=1$), where the coordinates of $B^{(1)}(i,1)$ are
$(3i,3+i/L,-3i)$.  See the left hand side of
Figure \ref{fig:matchingrender} that illustrates
the idea behind the gadget.
The coordinates for $B^{(1)}(i,j)$ are $(3i,3j+i/L,-3i)$.
The first column of the next brick $B^{(2)}$ has axis $x_1$ and the
coordinates of $B^{(2)}(i,1)$ are $(0,4+i/L,L-1-3i)$, that is, these
boxes touch the previously defined boxes of $B^{(1)}$ from ``behind'' in
Figure~\ref{fig:matchingrender}. In general, $B^{(2)}(i,j)$ has
coordinates $(0,3j+1+i/L,L-3i)$. The next brick $B^{(3)}$ also has axis $x_1$, and the
coordinates for $B^{(3)}(i,j)$ are $(L/2+3\pi_j(i),3j+1+\pi_j(i)/L,L-3i)$,
that is, we change the box perturbations along the first and second
coordinate. Finally, the last brick $B^{(4)}$ has axis $x_3$ and the
coordinates are $(3L/2+3\pi_j(i),3j-\pi_j(i)/L,L-3i)$, i.e., they are placed
``in front of'' the bricks of $B^{(3)}$ in Figure~\ref{fig:matchingrender}.
This can be rewritten as $B^{(4)}(i',j)$ having coordinates
$(3L/2+3i',3j-i'/L,L-3\pi^{-1}_j(i'))$. Notice that in the final
brick, we indeed have the desired ordering, i.e., the ordering of the boxes
along the $x_1$ axis is as required. It is routine to check that the
intersection graph induced each column of this parallel matching gadget
consists of vertex disjoint paths of length four. Different columns are also
disjoint since projecting the boxes of column $j$ onto the $x_2$ axis results
in a subset of the open interval $\oiv{3j-0.5, 3j+2.5}$.

\paragraph*{Realizing an arbitrary matching of a biclique or clique.}

We can regard a general matching $M$ induced by two neighboring cells as a
permutation of $[L/8]^2$, which can be written as the product of three special
permutations by Corollary~\ref{cor:symgroups} that correspond to parallel
matchings; i.e., the matching $M$ is realizable as the succession of three parallel matchings.
This means that each edge of $M$ becomes a path of length three, so by using
three parallel matching gadgets in succession we can represent $M$. We add a
parity fix gadget to each box at the beginning of each wire, which will be
useful later to ensure that each edge has been subdivided an even number of
times.  As a result, we have realized $M$ using $O(1)$ bricks and $O(L)\times
O(L) \times O(L)$ space. This collection of boxes is called a \emph{general
matching} gadget. A general matching gadget has a first and a last brick where
it connects to the rest of the construction, we call these bricks endbricks.

If the goal is to realize a matching within a cell with vertex set $V_x$, then
we can just create two copies of $V_x$ (denoted by $V'_x$ and $V''_x$), with a
complete bipartite graph between them. For a matching edge $v_iv_j \in
\binom{V_x}{2}$, we identify it with the edge $v_i' v_j''$. Then we realize
the matching of this biclique using a general matching gadget.

\paragraph*{The branching gadget.}

The \emph{branching gadget} creates for all indices in $[L/8]^2$ a disjoint
copy of a star on $4$ vertices (that is, a vertex of degree $3$ with its
neighborhood of $3$ isolated vertices). This gadget contains four bricks, and
realizes $(L/8)^2$ disjoint stars. We use the first two bricks ($B^{(1)}$
and $B^{(2)}$) of the parallel matching gadget. The third brick $B'$ is a
translate of the first brick $B^{(1)}$ with the vector $(3,2,L-1)$, i.e., the
coordinates of $B'(i,j)$ are $(3i+3,3j+2+i/L,L-1-3i)$. The final brick $B''$ is
the translate of $B^{(2)}$ by the vector $(L,0,0)$.  See
Figure~\ref{fig:branching} for a rendering of the first ``column'' of the four
bricks. Vertices corresponding to $B^{(2)}$ have degree three, and their
neighbors are the boxes of the same index in $B^{(1)}$, $B'$ and $B''$.

\paragraph*{Constructing a module.}

Our goal is to define modules of side length $O(L)$ that are capable of
representing the role played by cells. The modules together must be able to
represent a subgraph of $\cC$ of maximum degree three, where the neighbors of
any vertex lie in distinct cells.

For all pairs of neighboring modules, we introduce a general matching gadget
to represent the matching required by $G$ between the two neighboring cells.
These gadgets form the \emph{interface}. Moreover, in the middle of each
module, we add another general matching gadget to represent the matching
within the cell; this gadget is the \emph{core} of the module. See
Figure~\ref{fig:module}. Finally, within each module, we tie the endbricks of
the core and the endbricks of the interface falling inside the module together
with a \emph{brick-tree}. The brick-tree is a collection of $(L/8)^2$
isomorphic and disjoint trees, realized as a collection of branching, elbow,
adjustment and bridge gadgets. Each tree $(i,j)$ has maximum degree three, and
its leaves are the boxes of index $(i,j)$ in the interface and in the core.

\begin{figure}
\begin{center}
\includegraphics[scale=0.8]{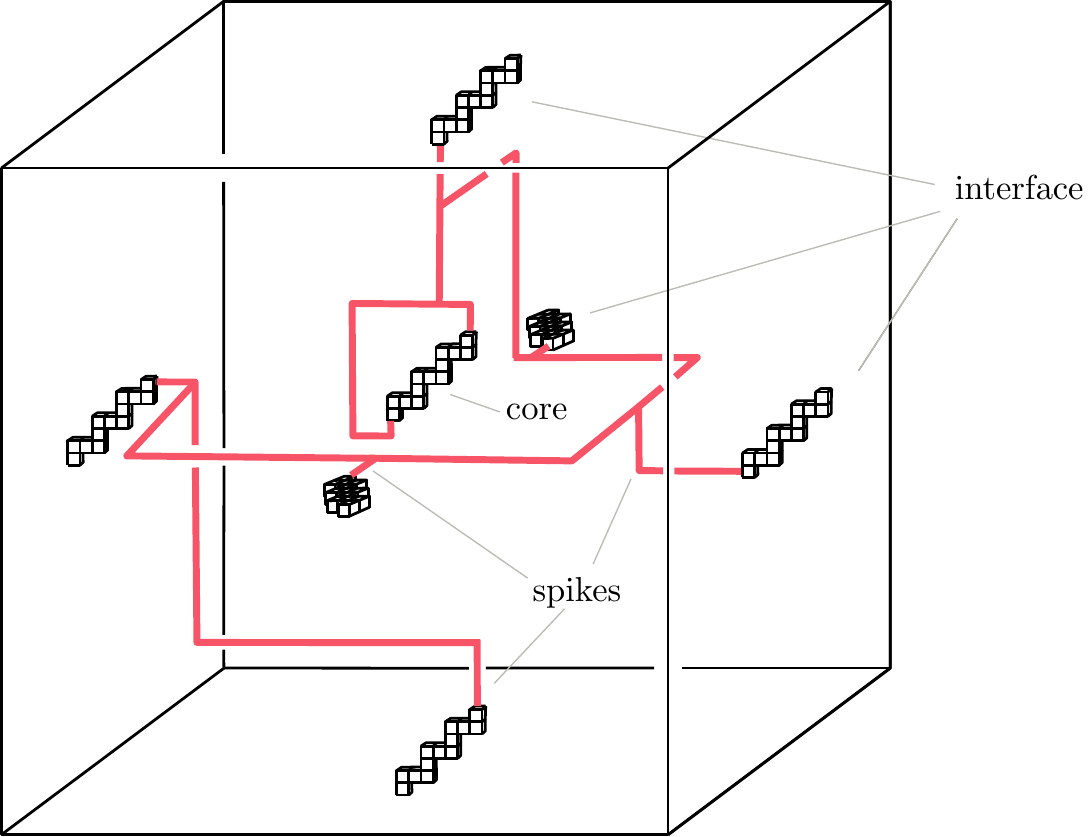}
\caption{A module with general matching gadgets of the interface and the
core, with the simplified image of a brick-tree (in red).}\label{fig:module}
\end{center}
\end{figure}

First, we show that such a construction is sufficient to represent an even
subdivision of an arbitrary subgraph $G$, and later we show how the brick-tree
can be constructed. Let $G$ be a subgraph with the desired properties, and let
$x$ be a particular cell. For each edge $uv$ induced by $x$, we fix an
arbitrary orientation, and realize the acquired matching so that the source
vertex of the arcs are in one end of the core and the targets are in the
other. Since the neighbors of any vertex lie in different cells, all indices
of $[L/8]^2$ appear at most once, either as a source of an arc, as a target of
an arc, or not at all. Then we realize the arcs using the core's general
matching gadget of the module. For each index $\bi\in [L/8]^2$, the edges
incident to vertex $\bi$ of $x$ can be assigned to a subtree $T$ of the tree
corresponding to index $\bi$, where $T$ has at most three leaves, at most one
of which is adjacent to a box of the core, and other leaves are adjacent to
boxes in distinct endbricks of the interface. There is a unique minimal
subtree $T$ that induces the desired (at most three) leaves; we can map a
vertex $v\in V(G)$ of degree three to the degree three vertex of $T$. If $V$
has a smaller degree, then it can be mapped to an arbitrary non-leaf vertex of
$T$.

To construct a brick-tree in $\Reals^3$, consider first a Euclidean grid cube
of size $O(1)$. We can use this small cube as a model of our module: in
general, an edge of this cube represents a brick. We have some edges already
occupied by the general matching gadgets corresponding to the interface and
the core. By choosing a cube large enough, we can ensure that these vertices
are distant in the $\ell_1$ norm. It is easy to see that if the cube is large
enough (we allow its size to depend only on $d$), then there is a subtree of
the grid of maximum degree three, where the leaves are some distant prescribed
vertices. Such a tree can be constructed for example by mimicking a
Hamiltonian path of the inscribed octahedron of the module, and adding to it
small ``spikes'' that go to the endbrick of the interfaces. At the end of the
path, we extend it towards the center of the cube, where we add another
branching for the two endbricks of the core. The branching points in the
brick-tree are branching gadgets, the turns are elbow gadgets, and straight
segments are bridges and adjustments.

\vspace*{-0.5em}
\paragraph*{Finalizing the construction in $\Reals^3$.}

By packing the modules in a side length $O(sL)$ Euclidean cube, and removing
unused boxes from each module according to the given subgraph, we get our
final construction for three dimensions. For each edge, we have it represented
by a sequence of $O(1)$ boxes passing through a single general matching
gadget. Using the parity fix gadget inside the general matching gadget, we can
ensure that the path representing the edge has an odd number of internal
vertices. Therefore, the final construction has $O(|V(G)|)$ boxes, and each
edge of $G$ is represented with a path of odd length, that is, the graph
induced by the boxes is an even subdivision of $G$.

\vspace*{-0.5em}
\paragraph*{The construction in higher dimensions}

It is surprisingly easy to adapt our three-dimensional construction to the
$d$-dimensional case. This time, we need to realize a subgraph of
$\cC=\BEC^{d}(s,(L/8)^{d-1})$.

The basic brick in $d$ dimensions contains $(L/8)^{d-1}$ boxes, indexed by
$[L/8]^{d-1}$, where the lexicographically minimal corner of box $\bi$ is
$(0,3\bi)$. For normal bricks, we allow perturbations of the form $3k\; (|k|
\in [L/8])$ along the axis of the brick, and $k/L\; (|k|\in [L/8])$ in all
other directions. The parity fix, adjustment, and elbow gadgets can be defined
analogously. The parallel matching gadget is also straightforward: the task
here is to represent a parallel matching, where each edge is of the form
$(\bi,\bi')\in [L/8]^{d-1}\times [L/8]^{d-1}$, where $\bi,\bi'$ differ only on
the $t$-th coordinate for some fixed $t\in [d-1]$. As previously, we can
extend this to $(L/8)^{d-2}$ permutations, where for each $\iota\in
[L/8]^{d-2}$, we have a permutation $\pi_{\iota}$ over the ``column'' $\iota$,
i.e., over the set
\[
\lbrace(i_1,\dots,i_{d-1})\;\big|\;i_t\in [L/8]
\text{ and }(i_1,\dots,i_{t-1},i_{t+1},\dots,i_{d-1})=\iota\rbrace.
\]
Such a permutation can be represented as described before: we replace the role
played by the $x_1$ axis with $x_t$, the role of $x_2$ with $x_{t+1 \bmod
(d-1)}$ and $x_3$ with $x_d$. Along all other axes, we introduce no
perturbations to the boxes. The column gadget corresponding to column
$\iota=(i_1,\dots,i_{t-1},i_{t+1},\dots,i_{d-1})$ can be covered
by\footnote{The formula is only accurate for the case $t\leq d-2$. If $t=d-1$,
the role of $x_{t+1}$ and $x_1$ should be switched.}
\[
\begin{split}
[3i_1,3i_1+1]&\times \dots \times [3i_{t-1},3i_{t-1}+1]\\
&\times[-L/2,3L/2]\times (3i_{t+1}-0.5,3i_{t+1}+2.5)]\\
&\times [3i_{t+2},3i_{t+2}+1] \times \dots
\times [3i_{d-1},3i_{d-1}+1] \times [0,\frac{3}{2}L].
\end{split}
\]
These sets are clearly disjoint for distinct values of $\iota$.

A general matching $M$ is regarded as a permutation of $[L/8]^{d-1}$, which
can be written as the product of $2(d-1)-1$ special permutations by
Corollary~\ref{cor:symgroups} that correspond to parallel matchings;
therefore, $M$ is realizable as the succession of $2d-3$ parallel matchings.
As a result, we can realize $M$ with $O(d)=O(1)$ bricks and $O(L)\times \dots
\times O(L)$ space. As before, we add parity fix gadgets to each box of one of
the endbricks.

To realize a brick-tree, we can again trace a Hamiltonian path of the graph
given by the dimension $1$ faces of the cross-polytope inside the module, and
add spikes to it to reach the endbricks of the interface and extend it to the
two endbricks of the core. Note that the cross-polytope does have a
Hamiltonian path, we can use e.g.
\[
(1,0,\dots,0);(0,1,0,\dots,0)\dots (0,\dots,0,1);
(-1,0,\dots,0);(0,-1,0,\dots,0) \dots (0,\dots,0,-1).
\]
The finalizing steps are again analogous to the $3$-dimensional case. This
concludes the proof of Theorem~\ref{thm:construct}.

\medskip

Using Theorem~\ref{thm:construct}, it is easy to
prove Theorem~\ref{thm:generallower}.

\begin{proof}[Proof of Theorem~\ref{thm:generallower}]

Set $L \eqdef \max(16,\alpha^{\frac{d}{d-1}})$. This choice of $L$ implies that
any family of canonical boxes of size $1\times 1 \times L$ are $O(\alpha)$-stabbed.
Furthermore, set $t=(L/8)^{d-1}$. The proof is by
reduction from \IS on subgraphs of the blown-up cube $\cC \eqdef
\BEC^{d}((\bar{n}/t)^{1/d},t)$, where the subgraph $G$ has
maximum degree three, and the neighbors of each vertex in $G$ lie in distinct
cells.  By Theorem~\ref{thm:indepinbec}, there is no $\gamma>0$ for which a
$2^{\gamma n^{1-1/d}t^{1/d}}$ algorithm exists for this problem under
ETH.

Let $G$ be a subgraph of $\cC$ as described above. By
Theorem~\ref{thm:construct}, we can realize an odd subdivision $G'$ of $G$
using boxes of size $1\times \dots\times1\times L$, with $O(\bar{n})$ vertices
in $\poly(\bar{n})$ time. If for any $\gamma>0$
there is an algorithm for \IS on $\alpha$-stabbed canonical boxes with running
time $2^{\gamma n^{1-1/d}\alpha}$, then this translates into $2^{\gamma
n^{1-1/d}L^{1-1/d}}$ algorithms for all $\gamma>0$. This can be composed with
our construction to get  $2^{\gamma \bar{n}^{(1-1/d)}t^{1/d}}$ algorithms for
all $\gamma>0$ for \IS on the described subgraphs of $\cC$, which contradicts
ETH according to Theorem~\ref{thm:indepinbec}.
\end{proof}

\section{A parameterized lower bound:
the proof sketch of Theorem~\ref{thm:paramlower}}\label{sec:app_paramlower}

In this section, show that a construction similar to the previous one yields a
parameterized lower bound as well, which almost matches the parameterized
algorithm given in Theorem~\ref{thm:paramalg}. Due to many similarities, we
only sketch this proof.

Our proof is a direct reduction from the \textsc{Partitioned Subgraph
Isomorphism} problem~\cite{Marx10}, where one is given a graph $G$ whose
vertex set is partitioned into the sets $A_1,A_2,\dots,A_k$, and a $3$-regular
graph $H$ on $k$ vertices $V(H)=\{v_1,\dots, v_k\}$, and the goal is to find a
subgraph isomorphism $\phi:V(H) \to V(G)$ such that $\phi(v_i) \in A_i$ for
all $1\leq i \leq k$. We know that there is no $f(k)n^{\gamma k/\log k}$ algorithm
for any $\gamma>0$ for \textsc{Partitioned Subgraph Isomorphism}~\cite{Marx10},
unless ETH fails.

Therefore, given an instance of \textsc{Partitioned Subgraph Isomorphism}, our
task is to construct a set of axis-parallel boxes that have independent set of
a certain size $g(k)$ if and only if there is a partitioned subgraph
isomorphism from $H$ to $G$. We will use modules that are very similar to the
modules used earlier, arranged in a larger Euclidean hypercube, but this time
we will only use the modules to realize the wires, and we add special gadgets
(equalizer and edge check) outside the modules in the top and bottom facet
that will connect to the endbricks outside these modules.

Note that our gadgetry is built on ideas as seen in the $W[1]$-hardness proof
of \IS in Unit Disk graphs; see Theorem~14.34 in \cite{fptbook}.

\paragraph*{Tuples and inequality propagation}

Without loss of generality, assume that each partition class $A_i$ has size
$n$.

In this proof, it is convenient to work with open boxes instead of closed
ones. Instead of using boxes as basic building blocks, we use \emph{box
tuples}. A box tuple consists of intersecting boxes, each of which is a
perturbed version of a single box, where the perturbations along all axes are
of the form $t/L^2$ for some $|t|\in [L/8]$. Most of our tuples will contain
$n$ boxes.

Clearly, an independent set selects at most one box from each box tuple. The
crucial property of a sequence of well-placed box tuples is that they can
express an inequality in the following sense. Suppose we have two box tuples
with axis $x_1$: in the first tuple the coordinates of box $i$ are
$(i/L^2,0,0)$, while in the second, the coordinates of box $i$ are
$(L+i/L^2,0,0)$. It is easy to see that any independent set of size $2$ will
have to select one box from each tuple. Moreover, if it selects box $i$ from
the first tuple and box $i'$ from the second, then $i\leq i'$ holds. In this
example, we say that the inequality is transferred through
$x_1$-perturbations.

\paragraph*{Construction overview}

Let $G,H$ be our input graphs, on $\bar{n}\bar{k}$ and $\bar{k}$ vertices
respectively. Our goal is to give a set of $poly(\bar{n})$ boxes that have an
independent set of size $k =
\Theta(\bar{k}^{\frac{d}{d-1}}/\alpha^{\frac{d}{d-1}})$ if and only if the input is
a \textsc{yes}-instance.

For a given vertex $h_i\in V(H)$, the subgraph isomorphism needs to map it to
some vertex $v_i\in A_i$. We can encode $v_i$ as an integer in $[n]$.

The task is to somehow check if a choice $v_i\in A_i$ and $v_j \in A_j$ is
valid, i.e., if there is an edge $(v_i,v_j)\in E(G)$. In other words, we can
encode the set of edges going between $A_i$ and $A_j$ as a set
$S_{i,j}\subseteq [n]^2$, and the task is to check if there exists an
$(a,b)\in S_{i,j}$ such that $v_i = a$ and $v_j=b$.

Let $L=\alpha^{\frac{d}{d-1}}$, and set $k \eqdef c \cdot
\bar{k}^{\frac{d}{d-1}}/L$, where $c$ will be specified later.
Let $\cC=\BEC^d(s,t)$, where $s \eqdef (k/t)^{1/d}$ and $t \eqdef (L/8)^{d-1} $.

We assign six
column-neighboring vertices of a cell in the bottom facet $P$ to each vertex $h\in
V(H)$, and five cross-arranged vertices of a cell in the top facet $Q$ of $\cC$ to
each edge of $H$. (In $3$ dimensions, cross-arranged vertices have indices
$(i,j),(i+1,j),(i-1,j),(i,j+1),(i,j-1)$.)

Our choice for the picture of $h\in V(H)$ is expressed as a number $n_h \in
[n]$, that is encoded as the index of the boxes within the independent set of
the box tuples assigned to $h$. We make sure that the box with the same index
is picked in each of these six tuples using an \emph{equalizer} gadget. For an
edge $hh'$, we cerate two wires starting from two of the box tuples
corresponding to $h$, one expressing $\leq n_h$, the other expressing $\geq
n_h$, to two opposite box tuples of the cross assigned to the edge $hh'$ (for
example, to the tuples $(i+1,j),(i-1,j)$). Similarly, we create two wires from
two of the box tuples corresponding to $h'$, expressing $\leq n_{h'}$, the
other expressing $\geq n_{h'}$, to the two other opposing tuples in the cross
(in our example, to $(i,j+1)$ and $(i,j-1)$). The middle of the cross, the box
tuple of index $(i,j)$ is replaced with an \emph{edge check} gadget.

These associations can be done injectively by a proper choice of $c$, since
\[|P|=|Q|=s^{d-1}t=(ck)^{\frac{d}{d-1}}t^{1/d}=\Theta(k^{\frac{d}{d-1}}\alpha) = \Theta(\bar{k}).\]
According
to Theorem~\ref{thm:becwire}, this wiring is realizable in $\cC$. The graph
induced by the wires has $\Theta(|V(\cC)|) =\Theta(s^d t)=\Theta(k)$ vertices.

\paragraph*{Gadgets: adapted gadgets, equalizer and edge check}

For the sake of simplicity, we describe our gadgets for $d=3$. A brick here
contains $(L/8)^2$ box tuples. To define a brick, we need to first define an
underlying "canonical" box for each tuple. These underlying boxes can have the same
perturbations as the ones allowed for box perturbations in the previous
construction. Secondly, we need to define the box perturbations within each of
the tuples $(i,j)\in [L/8]^2$ compared to this canonical box; these latter
perturbations we call offsets.

Observe that all the intersections used in our earlier gadgetry (with the
exception of parity fix gadgets that we do not use here) have the property
that they arise as two boxes touch at a facet, and the intersection is a
$(d-1)$-dimensional unit cube, contained within a $(d-1)$-dimensional
hyperplane. Therefore, we can generalize our bridge, elbow, adjust and
matching gadgets by introducing offsets perpendicular to the hyperplanes of
these intersections. The general matching gadget is constructed the same way
but without the parity fix gadgets.

\begin{figure}
\begin{center}
\includegraphics[height=5cm]{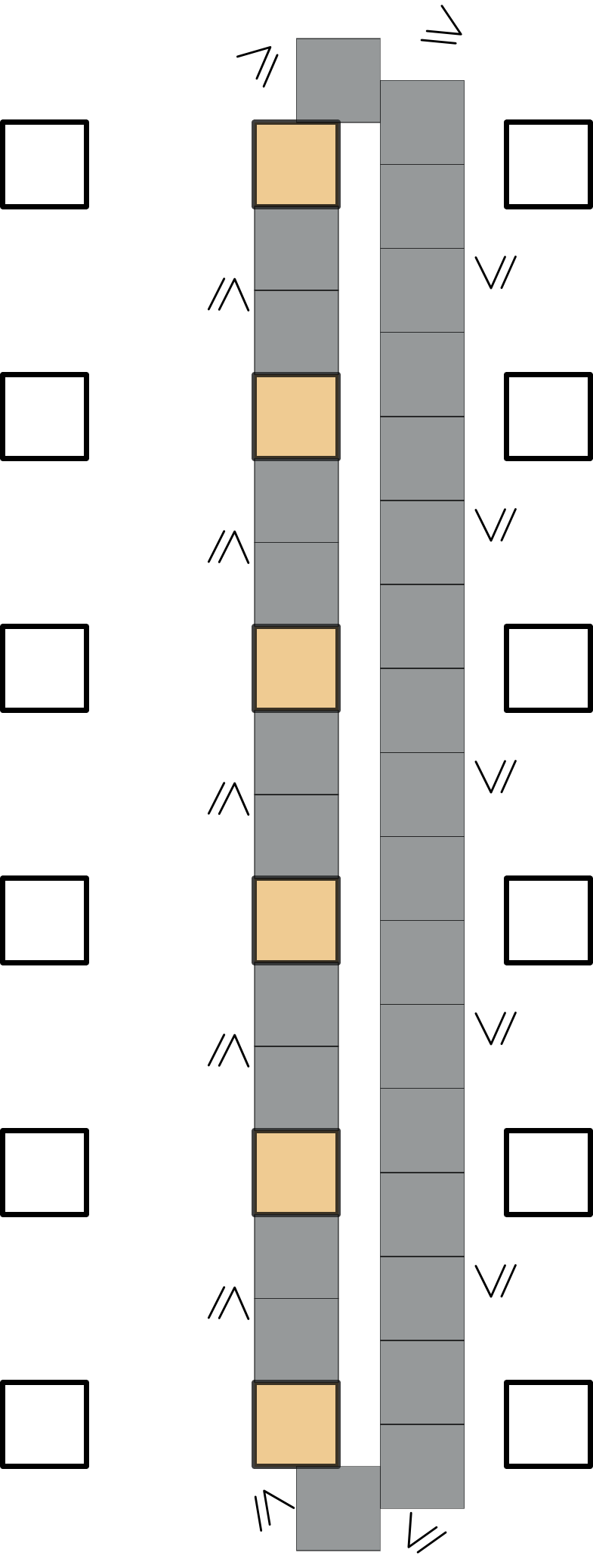}
\hspace{3cm}
\includegraphics[height=5cm]{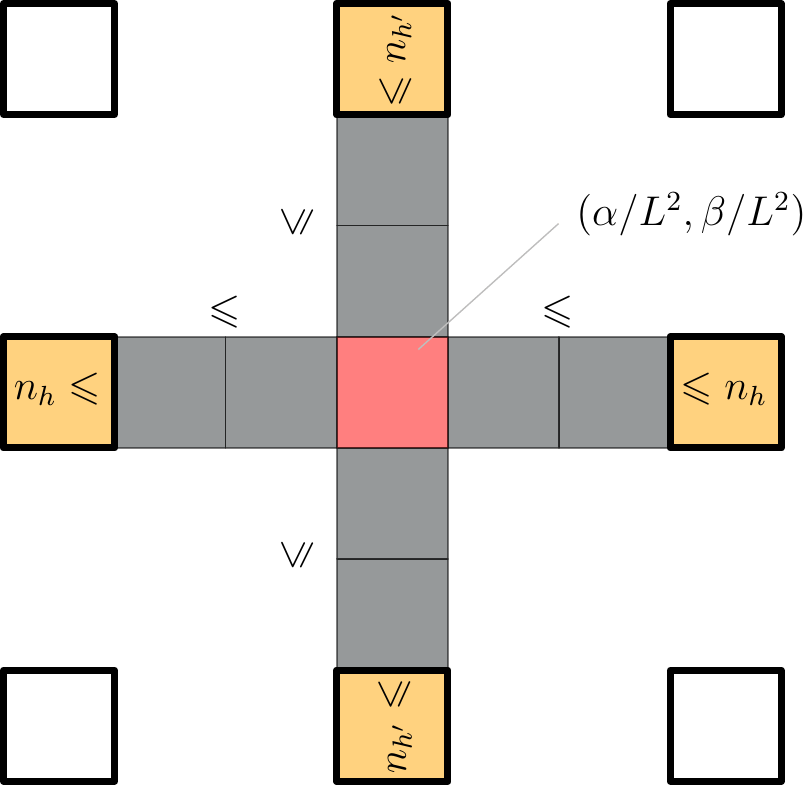}
\caption{Left: canonical boxes of an equalizer gadget. The offsets are defined
so that the indicated inequalities are realized. Right: canonical boxes of an
edge check gadget, with offsets along $x_1$ and $x_2$ corresponding to edges
between the partition classes to which $h$ and $h'$ are mapped
to.}\label{fig:equalizeedgcheck}
\end{center}
\end{figure}

Given our construction for the wiring from above, for each wire we can
introduce the offsets within each tuple so that the desired inequality is
carried through. Furthermore, we make sure that the start and end of each wire
is adjusted so that it can connect to the equalizer and edge check gadgets, as
detailed below.

The equalizer gadget relies on carrying an inequality along a cycle, see
Figure~\ref{fig:equalizeedgcheck} for a $3$-dimensional example, where the
view is from $x_3=\infty$. The tuples that connect to the individual wires are
drawn in orange. Notice that the offsets introduced on the $x_3$ coordinate
for the orange boxes must correspond to the type of inequality that the wire
carries.

The edge check gadget is again a simple construction, where the middle of the
cross is a tuple containing $O(n^2)$ boxes, where each box is associated with
an edge $e$ between the corresponding partition classes $A_k$ and $A_\ell$.
Such edges can be encoded as a subset of $[n]^2$, i.e., we have a box
$\alpha,\beta$ in the tuple if and only if vertex $\alpha$ of $A_k$ is
connected to vertex $\beta$ of $A_\ell$. The offset for box $(\alpha,\beta)$
compared to the canonical box in the brick is simply
$(0,\alpha/L^2,\beta/L^2)$.

This concludes the construction. Notice that the construction has an
independent set containing exactly one box from each box tuple if and only if
$H$ is a subgraph of $G$. The number of tuples needed for the equalizer and
edge check gadgets is insignificant compared to those needed in the wiring,
which is $\Theta(|V(\cC)|)=\Theta(k) = \Theta(\bar{k}^{\frac{d}{d-1}}/\alpha^{\frac{d}{d-1}})$. Consequently,
the construction has $k= \Theta(\bar{k}^{\frac{d}{d-1}}/\alpha^{\frac{d}{d-1}})$ box
tuples, as required.

The final instance has size $n=O(\bar{n}k)+O(\bar{k}\bar{n}^2)=O(\bar{n}^3)$. If for all $\gamma>0$ there is an algorithm for packing with
running time $n^{\gamma k^{1-1/d}\alpha}$, then we also have algorithms
for \textsc{Partitioned Subgraph Isomorphism} with running time
\[n^{\gamma k^{1-1/d}\alpha/\log k)}=\exp\left((\log \bar{n}) \cdot
\gamma' \left(\bar{k}^{\frac{d}{d-1}}/\alpha^{\frac{d}{d-1}}\right)^{1-1/d}\alpha/\log
\bar{k})\right) = \bar{n}^{\gamma'' \cdot \bar{k}/\log \bar{k}}\]
for all $\gamma''>0$, which would contradict
ETH.

\section{Conclusion}\label{sec:conclusion}

We have explored the impact of the stabbing number on the complexity of packing. We have seen that subexponential packing algorithms are possible for similarly sized objects if the stabbing number is $o(n^{1/d})$. The subexponential algorithms could be derived from powerful separator theorems, while the lower bounds required custom wiring results and non-trivial geometric gadgetry. We propose two open problems for future research.
\begin{itemize}
\item What is the precise impact of the stabbing number on the complexity of packing if objects are not similarly sized? One can get a subexponential algorithm by an adaptation of the separator in~\cite{BergBKMZ18}, but it yields an algorithm whose dependence on $\alpha$ is much weaker: it has $\alpha^d$ in the exponent instead of $\alpha$. Is this algorithm optimal?
\item Is there a subexponential algorithm for the \textsc{Dominating Set} problem in intersection graphs of $\alpha$-stabbed similarly sized objects? Or even for $n$ axis-parallel $1\times n^{\varepsilon}$ and $n^{\varepsilon} \times 1$ boxes in two dimensions?
\end{itemize}


\bibliography{boxes.bib}

\begin{thebibliography}{10}

\bibitem{abert2002symmetric}
Mikl{\'o}s Ab{\'e}rt.
\newblock Symmetric groups as products of abelian subgroups.
\newblock {\em Bulletin of the London Mathematical Society}, 34(4):451--456,
  2002.

\bibitem{DBLP:journals/corr/abs-1109-4347}
Yohji Akama and Kei Irie.
\newblock {VC} dimension of ellipsoids.
\newblock {\em CoRR}, abs/1109.4347, 2011.
\newblock \href {http://arxiv.org/abs/1109.4347} {\path{arXiv:1109.4347}}.

\bibitem{AlberF04}
Jochen Alber and Jir{\'{\i}} Fiala.
\newblock Geometric separation and exact solutions for the parameterized
  independent set problem on disk graphs.
\newblock {\em Journal of Algorithms}, 52(2):134--151, 2004.
\newblock \href {http://dx.doi.org/10.1016/j.jalgor.2003.10.001}
  {\path{doi:10.1016/j.jalgor.2003.10.001}}.

\bibitem{BasteT17}
Julien Baste and Dimitrios~M. Thilikos.
\newblock Contraction-bidimensionality of geometric intersection graphs.
\newblock In {\em {IPEC} 2017}, volume~89 of {\em LIPIcs}, pages 5:1--5:13,
  2018.
\newblock \href {http://dx.doi.org/10.4230/LIPIcs.IPEC.2017.5}
  {\path{doi:10.4230/LIPIcs.IPEC.2017.5}}.

\bibitem{BiroBMMR17}
Csaba Bir{\'{o}}, {\'{E}}douard Bonnet, D{\'{a}}niel Marx, Tillmann Miltzow,
  and Pawe{\l} Rz{\k{a}}{\.{z}}ewski.
\newblock Fine-grained complexity of coloring unit disks and balls.
\newblock {\em JoCG}, 9(2):47--80, 2018.
\newblock \href {http://dx.doi.org/10.20382/jocg.v9i2a4}
  {\path{doi:10.20382/jocg.v9i2a4}}.

\bibitem{Chan03}
Timothy~M. Chan.
\newblock Polynomial-time approximation schemes for packing and piercing fat
  objects.
\newblock {\em J. Algorithms}, 46(2):178--189, 2003.
\newblock \href {http://dx.doi.org/10.1016/S0196-6774(02)00294-8}
  {\path{doi:10.1016/S0196-6774(02)00294-8}}.

\bibitem{DBLP:conf/soda/ChlebikC05}
Miroslav Chleb{\'{\i}}k and Janka Chleb{\'{\i}}kov{\'{a}}.
\newblock Approximation hardness of optimization problems in intersection
  graphs of \emph{d}-dimensional boxes.
\newblock In {\em Proceedings of {SODA} 2005}, pages 267--276. {SIAM}, 2005.
\newblock URL: \url{http://dl.acm.org/citation.cfm?id=1070432.1070470}.

\bibitem{fptbook}
Marek Cygan, Fedor~V Fomin, {\L}ukasz Kowalik, Daniel Lokshtanov, D{\'a}niel
  Marx, Marcin Pilipczuk, Micha{\l} Pilipczuk, and Saket Saurabh.
\newblock {\em Parameterized Algorithms}.
\newblock Springer, 2015.

\bibitem{BergBKMZ18}
Mark de~Berg, Hans~L. Bodlaender, S{\'{a}}ndor Kisfaludi{-}Bak, D{\'{a}}niel
  Marx, and Tom~C. van~der Zanden.
\newblock A framework for {ETH}-tight algorithms and lower bounds in geometric
  intersection graphs.
\newblock In {\em Proceedings of {STOC} 2018}, pages 574--586, 2018.
\newblock \href {http://dx.doi.org/10.1145/3188745.3188854}
  {\path{doi:10.1145/3188745.3188854}}.

\bibitem{frameworkpaper}
Mark de~Berg, Hans~L. Bodlaender, S{\'{a}}ndor Kisfaludi{-}Bak, D{\'{a}}niel
  Marx, and Tom~C. van~der Zanden.
\newblock A framework for {ETH}-tight algorithms and lower bounds in geometric
  intersection graphs.
\newblock {\em CoRR}, abs/1803.10633, 2018.
\newblock \href {http://arxiv.org/abs/1803.10633} {\path{arXiv:1803.10633}}.

\bibitem{DBLP:journals/talg/DemaineFHT05}
Erik~D. Demaine, Fedor~V. Fomin, Mohammad~Taghi Hajiaghayi, and Dimitrios~M.
  Thilikos.
\newblock Fixed-parameter algorithms for $(k,r)$-{C}enter in planar graphs and
  map graphs.
\newblock {\em ACM Transactions on Algorithms}, 1(1):33--47, 2005.

\bibitem{DemaineFHT05}
Erik~D. Demaine, Fedor~V. Fomin, Mohammad~Taghi Hajiaghayi, and Dimitrios~M.
  Thilikos.
\newblock Subexponential parameterized algorithms on bounded-genus graphs and
  ${H}$-minor-free graphs.
\newblock {\em Journal of the {ACM}}, 52(6):866--893, 2005.
\newblock \href {http://dx.doi.org/10.1145/1101821.1101823}
  {\path{doi:10.1145/1101821.1101823}}.

\bibitem{DBLP:journals/csr/DornFT08}
Frederic Dorn, Fedor~V. Fomin, and Dimitrios~M. Thilikos.
\newblock Subexponential parameterized algorithms.
\newblock {\em Computer Science Review}, 2(1):29--39, 2008.

\bibitem{DBLP:journals/jcss/DornFT12}
Frederic Dorn, Fedor~V. Fomin, and Dimitrios~M. Thilikos.
\newblock Catalan structures and dynamic programming in ${H}$-minor-free
  graphs.
\newblock {\em J. Comput. Syst. Sci.}, 78(5):1606--1622, 2012.

\bibitem{DornPBF10}
Frederic Dorn, Eelko Penninkx, Hans~L. Bodlaender, and Fedor~V. Fomin.
\newblock Efficient exact algorithms on planar graphs: {E}xploiting sphere cut
  decompositions.
\newblock {\em Algorithmica}, 58(3):790--810, 2010.

\bibitem{DBLP:journals/ipl/FominLRS11}
Fedor~V. Fomin, Daniel Lokshtanov, Venkatesh Raman, and Saket Saurabh.
\newblock Subexponential algorithms for partial cover problems.
\newblock {\em Inf. Process. Lett.}, 111(16):814--818, 2011.

\bibitem{FominLS12}
Fedor~V. Fomin, Daniel Lokshtanov, and Saket Saurabh.
\newblock Bidimensionality and geometric graphs.
\newblock In {\em Proceedings of {SODA} 2012}, pages 1563--1575. SIAM, 2012.
\newblock URL:
  \url{http://portal.acm.org/citation.cfm?id=2095240&CFID=63838676&CFTOKEN=79617016}.

\bibitem{DBLP:journals/siamcomp/FominT06}
Fedor~V. Fomin and Dimitrios~M. Thilikos.
\newblock Dominating sets in planar graphs: Branch-width and exponential
  speed-up.
\newblock {\em SIAM J. Comput.}, 36(2):281--309, 2006.

\bibitem{Har-PeledQ17}
Sariel Har{-}Peled and Kent Quanrud.
\newblock Approximation algorithms for polynomial-expansion and low-density
  graphs.
\newblock {\em {SIAM} J. Comput.}, 46(6):1712--1744, 2017.
\newblock \href {http://dx.doi.org/10.1137/16M1079336}
  {\path{doi:10.1137/16M1079336}}.

\bibitem{Haussler1987}
David Haussler and Emo Welzl.
\newblock $\varepsilon$-nets and simplex range queries.
\newblock {\em Discrete {\&} Computational Geometry}, 2(2):127--151, Jun 1987.
\newblock \href {http://dx.doi.org/10.1007/BF02187876}
  {\path{doi:10.1007/BF02187876}}.

\bibitem{ImpagliazzoP01}
Russell Impagliazzo and Ramamohan Paturi.
\newblock On the complexity of $k$-{SAT}.
\newblock {\em Journal of Computer and System Sciences}, 62(2):367--375, 2001.
\newblock \href {http://dx.doi.org/10.1006/jcss.2000.1727}
  {\path{doi:10.1006/jcss.2000.1727}}.

\bibitem{John2014}
Fritz John.
\newblock {\em Extremum Problems with Inequalities as Subsidiary Conditions},
  pages 197--215.
\newblock Springer Basel, Basel, 2014.
\newblock \href {http://dx.doi.org/10.1007/978-3-0348-0439-4_9}
  {\path{doi:10.1007/978-3-0348-0439-4_9}}.

\bibitem{DBLP:conf/soda/KleinM14}
Philip~N. Klein and D{\'a}niel Marx.
\newblock A subexponential parameterized algorithm for {S}ubset {TSP} on planar
  graphs.
\newblock In {\em SODA 2014 Proc.}, pages 1812--1830, 2014.

\bibitem{Marx10}
D{\'{a}}niel Marx.
\newblock Can you beat treewidth?
\newblock {\em Theory of Computing}, 6(1):85--112, 2010.
\newblock \href {http://dx.doi.org/10.4086/toc.2010.v006a005}
  {\path{doi:10.4086/toc.2010.v006a005}}.

\bibitem{MarxP15}
D{\'{a}}niel Marx and Michal Pilipczuk.
\newblock Optimal parameterized algorithms for planar facility location
  problems using {V}oronoi diagrams.
\newblock In {\em Proceedings of {ESA} 2015}, volume 9294 of {\em LNCS}, pages
  865--877. Springer, 2015.
\newblock \href {http://dx.doi.org/10.1007/978-3-662-48350-3_72}
  {\path{doi:10.1007/978-3-662-48350-3_72}}.

\bibitem{MarxS14}
D{\'{a}}niel Marx and Anastasios Sidiropoulos.
\newblock The limited blessing of low dimensionality: when $1-1/d$ is the best
  possible exponent for $d$-dimensional geometric problems.
\newblock In {\em Proceedings of {S}o{CG} 2014}, pages 67--76. ACM, 2014.
\newblock \href {http://dx.doi.org/10.1145/2582112.2582124}
  {\path{doi:10.1145/2582112.2582124}}.

\bibitem{MillerTTV97}
Gary~L. Miller, Shang{-}Hua Teng, William~P. Thurston, and Stephen~A. Vavasis.
\newblock Separators for sphere-packings and nearest neighbor graphs.
\newblock {\em J. {ACM}}, 44(1):1--29, 1997.
\newblock \href {http://dx.doi.org/10.1145/256292.256294}
  {\path{doi:10.1145/256292.256294}}.

\bibitem{DBLP:conf/stacs/PilipczukPSL13}
Marcin Pilipczuk, Micha\l{} Pilipczuk, Piotr Sankowski, and Erik~Jan van
  Leeuwen.
\newblock Subexponential-time parameterized algorithm for {S}teiner {T}ree on
  planar graphs.
\newblock In {\em STACS 2013 Proc.}, pages 353--364, 2013.

\bibitem{SmithW98}
Warren~D. Smith and Nicholas~C. Wormald.
\newblock Geometric separator theorems {\&} applications.
\newblock In {\em Proceedings of the 39th Annual Symposium on Foundations of
  Computer Science, {FOCS} 1998}, pages 232--243. IEEE Computer Society, 1998.
\newblock \href {http://dx.doi.org/10.1109/SFCS.1998.743449}
  {\path{doi:10.1109/SFCS.1998.743449}}.

\bibitem{DBLP:conf/esa/Thilikos11}
Dimitrios~M. Thilikos.
\newblock Fast sub-exponential algorithms and compactness in planar graphs.
\newblock In {\em ESA 2011 Proc.}, pages 358--369, 2011.

\bibitem{StappenHO93}
A.~Frank van~der Stappen, Dan Halperin, and Mark~H. Overmars.
\newblock The complexity of the free space for a robot moving amidst fat
  obstacles.
\newblock {\em Comput. Geom.}, 3:353--373, 1993.
\newblock \href {http://dx.doi.org/10.1016/0925-7721(93)90007-S}
  {\path{doi:10.1016/0925-7721(93)90007-S}}.

\end{thebibliography}

 \end{document}